\newcommand{\param}{2} 
\newcommand{\ignore}[1]{}
\newcommand{\rinn}{r_\mathrm{in}}
\newcommand{\Psymm}{{\cal P}_{\mathrm{symm}}}
\newcommand{\Pconv}{{\cal P}_{\mathrm{conv}}}
\newcommand{\Plcg}{{\cal P}_{\mathrm{slcg}}}
\newcommand{\Pltf}{{\cal P}_{\mathrm{LTF}}}
\newcommand{\Mix}{\mathrm{Mix}(\Psymm)}
\newcommand{\Mixgeneral}{\mathrm{Mix}(\Pconv)}
\newcommand{\Mixlcg}{\mathrm{Mix}(\Plcg)}
\newcommand{\dtv}{\mathrm{d}_{\mathrm{TV}}}
\newcommand{\Convex}{\textsc{Convex-Distinguisher}}
\newcommand{\symmconvex}{\textsc{Symm-Convex-Distinguisher}}
\def\Ball{\mathrm{Ball}}
\newcommand{\Poincare}{Poincar\'e}
\newcommand{\GMT}{\textsc{Gaussian-Mean-Testing}}
\newcommand{\dchi}{\mathrm{d}_{\chi^2}}
\renewcommand{\S}{\mathbb{S}}
\title{
Testing  Convex Truncation\thanks{A preliminary version of this paper appeared in the Proceedings of the 2023 Annual ACM-SIAM Symposium on Discrete Algorithms (SODA).}
\vspace{1em}}
\author{}
\date{}
\begin{document}

\pagenumbering{gobble}
\maketitle

\vspace{-6em}

{\large
\begin{center}

\begin{minipage}{0.32\textwidth}
	\centering
	Anindya De\\[0.25em]
	{University of Pennsylvania}
\end{minipage}
\begin{minipage}{0.32\textwidth}
	\centering
	Shivam Nadimpalli\\[0.25em]
	{MIT}
\end{minipage}
\begin{minipage}{0.32\textwidth}
	\centering
	Rocco A. Servedio\\[0.25em]
	{Columbia University}
\end{minipage}

\vspace{2em}

\today

\end{center}
}

\vspace{1em}

\begin{abstract}
We study the basic statistical problem of testing whether normally distributed $n$-dimensional data has been \emph{truncated}, i.e.~altered by only retaining points that lie in some unknown truncation set $S \subseteq \R^n$. 
As our main algorithmic results,
\begin{enumerate}

\item We give an $O(n)$-sample algorithm that can distinguish the standard normal distribution $N(0,I_n)$ from $N(0,I_n)$ conditioned on an unknown and arbitrary \emph{convex set} $S$.

\item We give a different $O(n)$-sample algorithm that can distinguish $N(0,I_n)$ from $N(0,I_n)$ conditioned on an unknown and arbitrary \emph{mixture of symmetric convex sets}.

\end{enumerate}
Both our algorithms are computationally efficient and run in $O(n^2)$ time, which is linear in the size of the input.

These results stand in sharp contrast with known results for learning or testing convex bodies with respect to the normal distribution or learning convex-truncated normal distributions, where state-of-the-art algorithms require essentially $n^{\sqrt{n}}$ samples. An easy argument shows that no finite number of samples suffices to distinguish $N(0,I_n)$ from an unknown and arbitrary mixture of general (not necessarily symmetric) convex sets, so no common generalization of results (1) and (2) above is possible.

We also prove that any algorithm (computationally efficient or otherwise) that can distinguish $N(0,I_n)$ from $N(0,I_n)$ conditioned on an unknown symmetric convex set must use $\Omega(n)$ samples. This shows that the sample complexity of each of our algorithms is optimal up to a constant factor.

\end{abstract}

\newpage
\pagenumbering{arabic}
\setcounter{page}{1}


\section{Introduction}
\label{sec:intro}

Understanding distributions which have been \emph{truncated}, i.e. subjected to some type of conditioning, is one of the oldest and most intensively studied questions in probability and statistics.  Research on truncated distributions goes back the work of Bernoulli \cite{Bernoulli60}, Galton \cite{Galton97}, Pearson \cite{Pearson02}, and other pioneers; we refer the reader to the introductions of \cite{DGTZ,Kontonis2019}  for historical context, and to \cite{Schneider86,BC14,Cohen16} for contemporary book-length studies of statistical truncation.

In recent years a nascent line of work \cite{DKTZcolt21,FKTcolt20,DGTZcolt19,DGTZ}
has considered various different learning and inference problems for truncated distributions from a modern theoretical computer science perspective (see \Cref{sec:related-work} for a more detailed discussion of these works and how they relate to the results of this paper).
The current paper studies an arguably more basic statistical problem than learning or inference, namely \emph{distinguishing} between a null hypothesis (that there has been no truncation) and an alternative hypothesis (that some unknown truncation has taken place).  

In more detail, we consider a high-dimensional version of the fundamental problem of determining whether given input data was drawn from a known underlying probability distribution  ${\cal P}$, versus from ${\cal P}$ conditioned on some unknown \emph{truncation set $S$} (we write ${\cal P}|_S$ to denote such a truncated distribution).  In our work the known high-dimensional distribution ${\cal P}$ is the $n$-dimensional standard normal distribution $N(0,I_n)$, and we consider a very broad and natural class of possible truncations, corresponding to conditioning on an unknown \emph{convex set} (and variations of this class).

As we discuss in detail in \Cref{sec:related-work}, the sample complexity and running time of known algorithms for a number of related problems, such as learning convex-truncated normal distributions \cite{Kontonis2019}, learning convex sets under the normal distribution \cite{KOS:08}, and testing whether an unknown set is convex under the normal distribution \cite{CFSS17}, all scale exponentially in $\sqrt{n}$.  
In sharp contrast, all of our distinguishing algorithms have sample complexity \emph{linear} in $n$ and running time at most $\poly(n)$.  Thus, our results can be seen as an exploration of one of the most fundamental questions in testing---namely, \emph{can we test faster than we can learn?} What makes our work different is that we allow the algorithm only to have access to random samples, which is weaker than the more powerful query access 
that is standardly studied in the complexity theoretic literature on property testing.  However, from the vantage point of statistics and machine learning, having only sample access is arguably more natural than allowing queries. Indeed, motivated by the work of Dicker~\cite{dicker2014variance} in statistics, a number of recent results in computer science~\cite{kong2018estimating, CDS20stoc, KongBV20} have explored the distinction between {\em testing versus learning} from random samples, and our work is another instantiation of this broad theme. 
To complement our algorithmic upper bounds, we also give a number of information theoretic lower bounds on sample complexity, which in some cases nearly match our algorithmic results.  We  turn to a detailed discussion of our results below.




\subsection{Our Results}

We give algorithms and lower bounds for a range of problems on distinguishing the normal distribution from various types of convex truncations.  

\subsubsection{Efficient  Algorithms}

Our most basic algorithmic result is an algorithm for symmetric convex sets:

\begin{theorem}[Symmetric convex truncations, informal statement]
 \label{thm:symmetric-convex-informal}
There is an algorithm \symmconvex~which uses $O(n/\eps^2)$ samples, runs in $O(n^2/\eps^2)$ time, and distinguishes between the standard $N(0,I_n)$ distribution and any distribution ${\cal D}=N(0,I_n)|_S$ where $S \subset \R^n$ is any symmetric convex set with Gaussian volume at most\footnote{Note that a Gaussian volume upper bound on $S$ is a necessary assumption, since the limiting case where the Gaussian volume of $S$ equals 1 is the same as having no truncation. In this case, it is information-theoretically impossible to detect truncation with any finite number of samples.}
 $1-\eps.$
\end{theorem}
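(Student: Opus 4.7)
My plan is to use the squared-norm statistic $\|X\|^2$ as a one-dimensional distinguisher. The tester draws $m = O(n/\eps^2)$ samples $X_1,\ldots,X_m$, computes $\hat T = \tfrac{1}{m}\sum_{i=1}^m \|X_i\|^2$, and outputs ``truncated'' iff $\hat T \leq n - \tau$ for a threshold $\tau = \tau(\eps) > 0$. Under $N(0,I_n)$ the standard $\chi^2$ identities give $\mathbb{E}[\hat T] = n$ and $\mathrm{Var}[\hat T] = 2n/m$. The analysis then needs two ingredients under $N(0,I_n)|_S$: (i) a quantitative downward shift $\mathbb{E}_{N|_S}[\|X\|^2] \leq n - c(\eps)$ with $c(\eps) = \Omega(\eps)$; and (ii) a variance bound $\mathrm{Var}_{N|_S}(\|X\|^2) = O(n)$.

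Ingredient (ii) is essentially free: since $\mathrm{Var}(Y) \leq \mathbb{E}[(Y-c)^2]$ for any constant $c$,
\[
    \mathrm{Var}_{N|_S}(\|X\|^2) \;\leq\; \mathbb{E}_{N|_S}\!\bigl[(\|X\|^2-n)^2\bigr] \;\leq\; \frac{\mathbb{E}_N\!\bigl[(\|X\|^2-n)^2\bigr]}{\gamma_n(S)} \;=\; \frac{2n}{1-\eps} \;=\; O(n),
\]
where $\gamma_n$ denotes $n$-dimensional Gaussian measure. Ingredient (i) is the crux and the main obstacle: symmetry of $S$ forces $\mathbb{E}_{N|_S}[X]=0$, so any signal must come from a second-order statistic, and it is not a priori obvious that an arbitrary symmetric convex body of Gaussian measure $1-\eps$ produces a second-moment shift scaling with $\eps$ rather than with some much smaller geometric quantity. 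To certify this I would invoke the Latala--Oleszkiewicz $S$-inequality, which asserts that among symmetric convex bodies of fixed Gaussian measure, the symmetric slab $\Slab_a := \{x \in \R^n : |x_1| \leq a\}$ is extremal: if $\gamma_n(S) = \gamma_n(\Slab_a) = 1-\eps$, then $\gamma_n(tS) \geq \gamma_n(t\,\Slab_a)$ for every $t \geq 1$.

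To convert this into a shift in $\mathbb{E}[\|X\|^2]$, I would differentiate $g_S(t) := \gamma_n(tS)$ under the integral to obtain the identity $g_S'(1) = \gamma_n(S)\cdot(n - \mathbb{E}_{N|_S}[\|X\|^2])$. Since $g_S$ and $g_{\Slab_a}$ agree at $t=1$ and the $S$-inequality orders them on $[1,\infty)$, the right derivatives satisfy $g_S'(1^+) \geq g_{\Slab_a}'(1^+)$, so
\[
    n - \mathbb{E}_{N|_S}[\|X\|^2] \;\geq\; n - \mathbb{E}_{N|_{\Slab_a}}[\|X\|^2] \;=\; \frac{2a\,\phi(a)}{1-\eps},
\]
where the equality is a routine one-dimensional integration by parts and $a$ is fixed by $\Phi(a) = 1 - \eps/2$. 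Mills'-ratio estimates for $a$ then make the right-hand side $\Omega(\eps)$ on the whole range $\eps \in (0,1)$ (and in fact $\Theta(\eps\log(1/\eps))$ as $\eps \to 0$), giving $c(\eps) = \Omega(\eps)$. Setting $\tau = c(\eps)/2$ and applying Chebyshev to $\hat T$ with the variance bound from (ii), $m = O(n/\eps^2)$ samples suffice for constant-probability distinguishing, and the tester clearly runs in $\poly(n,1/\eps)$ time. The principal conceptual hurdle is the reduction to the slab through the $S$-inequality; absent such a slab-extremal statement, one could imagine symmetric convex bodies producing shifts much smaller than $\eps$, in which case the simple statistic $\|X\|^2$ would fail and a direction-adaptive test would seem to be required.
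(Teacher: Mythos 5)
Your ingredient (i) is sound, and it is in fact a transparent re-derivation of the key lemma the paper uses: the quantity $g_S'(1) = \vol(S)\,\big(n - \E_{N|_S}[\|X\|^2]\big)$ is (up to a factor of $\sqrt 2$) exactly the ``total convex influence'' of $S$ from \Cref{def:influence}, and the paper's \Cref{lem:bound-on-expectation} obtains the same $\Omega(\eps)$ downward shift by citing a \Poincare-type inequality for convex influences (\Cref{prop:poincare}, from \cite{DNS22}), whose proof is itself driven by the Lata{\l}a--Oleszkiewicz $S$-inequality. So routing through the $S$-inequality and the extremal slab directly is a legitimate and essentially equivalent path. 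One point to tighten: the $S$-inequality compares $S$ to the slab of \emph{equal} Gaussian volume, while the hypothesis only gives $\vol(S)\le 1-\eps$; you therefore need the slab shift $2a\phi(a)/(2\Phi(a)-1)$ to be monotone (increasing as the volume decreases) in order to conclude the shift is $\Omega(\eps)$ for \emph{every} admissible volume, not just for $\vol(S)=1-\eps$ exactly. This is true but must be argued.

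The genuine gap is ingredient (ii). The chain $\Var_{N|_S}(\|X\|^2) \le \E_N[(\|X\|^2-n)^2]/\vol(S) = 2n/(1-\eps)$ silently replaces $\vol(S)$ by $1-\eps$, but the hypothesis $\vol(S)\le 1-\eps$ points the wrong way: what you actually prove is $\Var_{N|_S}(\|X\|^2)\le 2n/\vol(S)$, and $\vol(S)$ may be arbitrarily small --- e.g.\ a tiny origin-centered ball or a very thin slab has volume $e^{-\Theta(n)}$ while satisfying the hypothesis. For such sets your variance bound is exponentially large, and Chebyshev then certifies only $O(n/\vol(S))$ samples rather than $O(n/\eps^2)$. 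The conclusion is still true (for a tiny ball the variance of $\|X\|^2$ is in fact minuscule), but your argument does not establish it, and no comparison to the untruncated Gaussian can, since that route always pays a $1/\vol(S)$ factor. The paper closes exactly this hole with the Brascamp--Lieb inequality (\Cref{prop:BL}, used in \Cref{lem:bound-variance}): the density of $N(0,I_n)|_S$ is $e^{-V}\varphi_n$ with $V$ convex, so the measure satisfies a \Poincare\ inequality with constant $1$, whence $\Var_{N|_S}(\|X\|^2) \le \E_{N|_S}\big[\|\nabla(\|x\|^2)\|^2\big] = 4\,\E_{N|_S}[\|X\|^2] \le 4n$, uniformly in $\vol(S)$. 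Some such volume-independent tool is needed to complete your proof.
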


The algorithm $\symmconvex$ is quite simple: it estimates the expected squared length of a random draw from the distribution and checks whether this value is significantly smaller than it should be for the $N(0,I_n)$ distribution. (See \Cref{sec:techniques} for a more thorough discussion of $\symmconvex$ and the techniques underlying its analysis.) By extending the analysis of $\symmconvex$  we are able to show that the same algorithm in fact succeeds for a broader class of truncations, namely truncation by any mixture of symmetric convex distributions:

\begin{theorem} 
[Mixtures of symmetric convex truncations, informal statement]
\label{thm:symmetric-convex-mixture-informal}
The algorithm \symmconvex~uses $O(n/\eps^2)$ samples, runs in $O(n^2/\eps^2)$ time, and distinguishes between the standard $N(0,I_n)$ distribution and any distribution ${\cal D}$ which is a normal distribution conditioned on a mixture of symmetric convex sets such that $\dtv(N(0,I_n),{\cal D}) \geq \eps$ (where $\dtv(\cdot,\cdot)$ denotes total variation distance).
\end{theorem}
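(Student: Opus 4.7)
The plan is to observe that the same estimator used in the proof of \Cref{thm:symmetric-convex-informal}---the empirical mean of $\|X\|^2$---already distinguishes $N(0,I_n)$ from an arbitrary convex mixture $\mathcal{D} = \sum_i w_i \cdot N(0,I_n)|_{S_i}$ of symmetric-convex truncations whenever $\dtv(N(0,I_n),\mathcal{D}) \geq \eps$. The reduction to the single-set case rests on just two facts: linearity of expectation for the statistic $\|X\|^2$, and convexity of $\dtv$.

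Granting what must be the core structural lemma behind \Cref{thm:symmetric-convex-informal}---that for any symmetric convex $S \subseteq \R^n$ of Gaussian measure $1 - \alpha$ one has $\mathbb{E}_{X \sim N(0,I_n)|_S}[\|X\|^2] \leq n - c\alpha$ and $\mathrm{Var}_{X \sim N(0,I_n)|_S}[\|X\|^2] = O(n)$ for some absolute constant $c > 0$---the argument is short. Using the identity $\dtv(N(0,I_n),N(0,I_n)|_{S_i}) = 1 - \Pr_{N(0,I_n)}[S_i] =: \alpha_i$, convexity of $\dtv$ gives
\[
\eps \leq \dtv(N(0,I_n),\mathcal{D}) \leq \sum_i w_i \cdot \dtv(N(0,I_n),N(0,I_n)|_{S_i}) = \sum_i w_i \alpha_i.
\]
Then by linearity and the structural lemma,
\[
\mathbb{E}_{X \sim \mathcal{D}}[\|X\|^2] = \sum_i w_i \cdot \mathbb{E}_{X \sim N(0,I_n)|_{S_i}}[\|X\|^2] \leq \sum_i w_i (n - c\alpha_i) \leq n - c\eps,
\]
so the test statistic's mean under $\mathcal{D}$ lies at least $c\eps$ below its value $n$ under $N(0,I_n)$, matching the guarantee in the single-set case.

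It remains to bound $\mathrm{Var}_{\mathcal{D}}[\|X\|^2]$ in order to apply Chebyshev. Setting $g_i := n - \mathbb{E}_{N(0,I_n)|_{S_i}}[\|X\|^2] \in [0,n]$ and writing $G := \mathbb{E}_{i \sim w}[g_i] \geq c\eps$, the law-of-total-variance decomposition together with the single-set variance bound and the crude estimate $g_i^2 \leq n g_i$ yields
\[
\mathrm{Var}_{\mathcal{D}}[\|X\|^2] = \mathbb{E}_{i \sim w}\bigl[\mathrm{Var}_{N(0,I_n)|_{S_i}}[\|X\|^2]\bigr] + \mathrm{Var}_{i \sim w}[g_i] = O(n) + O(nG),
\]
so the Chebyshev sample complexity $O(\mathrm{Var}/G^2)$ is $O(n/G^2 + n/G) = O(n/\eps^2)$, as claimed.

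The main obstacle is, in fact, the structural lemma itself---the linear-in-$\alpha$ mean gap for a single symmetric convex truncation---which is the substantive content of \Cref{thm:symmetric-convex-informal} and presumably requires a genuine Gaussian convex-geometry input (e.g.\ symmetrization or an $S$-type inequality). Once that lemma is granted, the passage to mixtures is essentially the one-paragraph combination of $\dtv$-convexity, linearity, and variance decomposition above; the only mildly subtle point is checking that the between-component variance $\mathrm{Var}_{i\sim w}[g_i]$ does not blow up, which the trivial bound $g_i \leq n$ handles directly.
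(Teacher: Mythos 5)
Your proposal is correct and follows essentially the same route as the paper: the paper's Lemma~\ref{lem:calc-for-mix} is exactly your law-of-total-variance decomposition, the mean gap is obtained from the single-set bound (Lemma~\ref{lem:bound-on-expectation}) plus convexity of $\dtv$ exactly as you do, and the between-component variance is tamed by the same crude bound $0 \le n - a_K \le n$ (hence $g_i^2 \le n g_i$) before applying Chebyshev. The structural lemma you grant is precisely the content of Lemmas~\ref{lem:bound-on-expectation} and~\ref{lem:bound-variance}, proved via the \Poincare{} inequality for convex influences and Brascamp--Lieb respectively.
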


It is not difficult to see that the algorithm $\symmconvex$, which only uses the empirical mean of the squared length of samples from the distribution, cannot succeed in distinguishing $N(0,I_n)$ from a truncation of $N(0,I_n)$ by a general (non-symmetric) convex set.  To handle truncation by general convex sets, we develop a different algorithm which uses both the estimator of $\symmconvex$ and also a second estimator corresponding to the squared length of the empirical mean of its input data points. We show that this algorithm succeeds for general convex sets:

\begin{theorem}[General convex truncations, informal statement]
\label{thm:general-convex-informal}
There is an algorithm \Convex\ which uses $O(n/\eps^2)$ samples, runs in $O(n^2/\eps^2)$ time, and distinguishes between the standard $N(0,I_n)$ distribution and any distribution ${\cal D}=N(0,I_n)|_S$ where $S \subset \R^n$ is any convex set such that $\dtv(N(0,I_n),N(0,I_n)|_S) \geq \eps.$
\end{theorem}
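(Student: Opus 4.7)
The plan is to have \Convex\ compute, on $m = \Theta(n/\eps^2)$ i.i.d.\ samples $X_1,\dots,X_m$, the two statistics
\[
T_1 \;:=\; \frac{1}{m}\sum_{i=1}^m \|X_i\|^2 \qquad\text{and}\qquad T_2 \;:=\; \Bigl\|\tfrac{1}{m}\sum_{i=1}^m X_i\Bigr\|^2,
\]
and to accept iff $T_1 \ge n - \tau_1$ \emph{and} $T_2 \le n/m + \tau_2$, for thresholds $\tau_1 = \Theta(\eps)$ and $\tau_2 = \Theta(\eps^2)$. Under $N(0,I_n)$ we have $\mathrm{Var}(T_1) = 2n/m = \Theta(\eps^2)$ and $\mathrm{Var}(T_2) = 2n/m^2 = \Theta(\eps^4/n)$, so Chebyshev's inequality ensures both conditions with high probability.

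The core technical ingredient is a \emph{dichotomy} for convex truncations: for every convex $S \subseteq \R^n$ with $\dtv(N(0,I_n), N(0,I_n)|_S) \ge \eps$, writing $\mu := \mathbb{E}_{X \sim N(0,I_n)|_S}[X]$, at least one of
\[
\text{(i)}\ \ \mathbb{E}_{X\sim N(0,I_n)|_S}[\|X\|^2] \;\le\; n - c_1\,\eps
\qquad\text{or}\qquad
\text{(ii)}\ \ \|\mu\|^2 \;\ge\; c_2\,\eps^2
\]
must hold, for absolute constants $c_1, c_2 > 0$. Granting this, the alternative-hypothesis analysis splits naturally: since $N(0,I_n)|_S$ is $1$-strongly log-concave, Brascamp--Lieb gives $\mathrm{Cov}(N(0,I_n)|_S) \preceq I_n$, so in case (i) we have $\mathrm{Var}(T_1) = O(n/m) = O(\eps^2)$ and $T_1$ falls below $n - \tau_1$ with high probability, while in case (ii) $\mathbb{E}[T_2] = \|\mu\|^2 + \tr(\mathrm{Cov})/m$ exceeds $n/m$ by $\Omega(\eps^2)$ and $\mathrm{Var}(T_2) = O(\eps^4/n + \eps^2\|\mu\|^2/n)$, so $T_2 > n/m + \tau_2$ with high probability.

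Proving the dichotomy is the main obstacle. A natural route is a case split on whether the origin lies in $S$. An elementary separating-hyperplane argument shows that every convex $S$ with $\gamma(S) > 1/2$ contains the origin, so for $\eps < 1/2$ we always work under $0 \in S$; the remaining regime $\eps \ge 1/2$ with $0 \notin S$ requires a separate argument that propagates the separation of $0$ from $S$, via a careful analysis of the shift of $\mathbb{E}[X \mid X \in H]$ for any half-space $H \supseteq S$, into a lower bound $\|\mu\|^2 = \Omega(\eps^2)$. For the principal case $0 \in S$, Gaussian integration by parts yields the identity
\[
n - \mathbb{E}_{X \sim N(0,I_n)|_S}\bigl[\|X\|^2\bigr] \;=\; \frac{1}{\gamma(S)} \int_{\partial S} \langle x, \nu_S(x)\rangle\, \phi(x)\,\mathrm{d}\sigma(x),
\]
where the integrand is pointwise non-negative by convexity. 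For any $x \in \partial S$ with outward normal $\nu_S(x)$, the supporting half-space $\{y : \langle y, \nu_S(x)\rangle \le \langle x, \nu_S(x)\rangle\}$ contains $S$ and hence has Gaussian measure $\ge 1-\eps$, so $\langle x, \nu_S(x)\rangle \ge \Phi^{-1}(1-\eps)$; combining with a Gaussian-isoperimetric lower bound on $\int_{\partial S}\phi\,\mathrm{d}\sigma$ gives $n - \mathbb{E}[\|X\|^2] = \Omega(\eps)$ and establishes case (i). The technical crux is unifying the two regimes and, in particular, handling the intermediate setting $0 \in S$ with $\mu \ne 0$, so that the dichotomy holds uniformly with a single choice of constants $c_1, c_2$.
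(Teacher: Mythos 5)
Your overall architecture (the two statistics $T_1,T_2$ and a mean/second-moment dichotomy) matches the paper's algorithm \Convex, and the dichotomy you state is in fact true --- the paper proves a stronger version in which case (ii) reads $\|\mu\|^2\geq\Omega(1)$ rather than $\Omega(\eps^2)$. But your proof of the dichotomy does not go through, and the failure traces to a consistent reversal of the volume constraint. The hypothesis $\dtv(N(0,I_n),N(0,I_n)|_S)\geq\eps$ means $\vol(S)\leq 1-\eps$, an \emph{upper} bound on the Gaussian mass of $S$; the truncation set can have arbitrarily small measure no matter how small $\eps$ is. Consequently: (a) the reduction to $0\in S$ for $\eps<1/2$ fails --- a tiny ball centered at $(5,0,\dots,0)$ has $\dtv\approx 1\geq\eps$ for every $\eps$ and does not contain the origin, so the $0\notin S$ regime, which you leave as an admitted sketch, is not confined to $\eps\geq 1/2$ and must be handled for all $\eps$; and (b) the step ``the supporting half-space contains $S$ and hence has Gaussian measure $\geq 1-\eps$'' is unjustified --- containment of $S$ only gives measure $\geq\vol(S)$, and when $0\in S$ all one can conclude is measure $\geq 1/2$, i.e.\ $\langle x,\nu_S(x)\rangle\geq 0$. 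The quantitative claim $\langle x,\nu_S(x)\rangle\geq\Phi^{-1}(1-\eps)$ is simply false (take $S$ a small origin-centered ball of radius $r$, where $\langle x,\nu_S(x)\rangle=r$ at every boundary point while $\dtv$ is close to $1$). The divergence-theorem identity itself is fine --- it is essentially the paper's total convex influence --- but lower-bounding that surface integral is the entire difficulty, and your argument for it collapses at exactly that point. Your final paragraph concedes that unifying the regimes is unresolved, so the key lemma remains unproven.

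For comparison, the paper's case split is on the \emph{inradius} of $S$, not on membership of the origin. When $\rinn(S)\geq 0.1$, a \Poincare-type inequality for convex influences (\Cref{prop:kkl}) directly gives $\mathbb{E}[\|X\|^2]\leq n-\Omega(\rinn\cdot\eps)=n-\Omega(\eps)$. When $\rinn(S)<0.1$, the set lies in a half-space whose boundary is at distance $0.1$ from the origin; if $\|\mu\|^2\geq 0.06$ the mean statistic fires against a \emph{constant} threshold (so a plain or robust mean estimate suffices), and if $\|\mu\|^2<0.06$ a shifted form of Vempala's quantitative one-dimensional Brascamp--Lieb inequality (\Cref{prop:vempala-shifted}) shows that the two coordinates spanned by the supporting direction and $\mu$ already cost a constant in $\mathbb{E}[\|X\|^2]$. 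To salvage your route you would need a correct quantitative replacement for the $\Phi^{-1}(1-\eps)$ step and a complete treatment of $0\notin S$; the concentration analysis of $T_1$ and $T_2$ via Brascamp--Lieb is the part of your write-up that is sound.
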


Given \Cref{thm:symmetric-convex-mixture-informal} and \Cref{thm:general-convex-informal}, it is natural to wonder about a common generalization to mixtures of general convex sets. However, an easy argument (which we sketch in \Cref{sec:no-generalization}) shows that no finite sample complexity is sufficient for this distinguishing problem, so no such common generalization is possible.

\ignore{
%
}

%

\subsubsection{An Information-Theoretic Lower Bound}

We show that the sample complexity of both our algorithms \Convex~and \symmconvex~are essentially the best possible, by giving an $\Omega(n/\eps)$-sample lower bound for any algorithm that successfully distinguishes $N(0,I_n)$ from $N(0,I_n)|_K$ where $K$ is an unknown symmetric convex set of volume $1-\eps$:

\begin{theorem}[Lower bound, informal statement]
\label{thm:halfspace-lb-informal}
For $\eps \in (0, 1/2]$, if $A$ is any algorithm which, given access to samples from an unknown distribution ${\cal D}$, successfully distinguishes the case that ${\cal D}=N(0,I_n)$ from the case that ${\cal D}=N(0,I_n)|_K$ where $K$ is an unknown symmetric convex set of volume $1-\eps$,  then $A$ must draw $\Omega\pbra{n/\eps}$ samples from $\calD$. 
\end{theorem}


\ignore{
%
%
%
}

\subsection{Techniques} \label{sec:techniques}

In this section, we give a technical overview of our upper and lower bounds, starting with the former.

\bigskip

\noindent {\bf Upper Bounds.}
To build intuition, let us first consider the case of a single symmetric convex body $K$. It can be shown, using symmetry and convexity of $K$, that draws from ${N}(0,I_n)|_K$ will on average lie closer to the origin than draws from ${N}(0,I_n)$, so it is natural to use this as the basis for a distinguisher. The proof of this relies on the background distribution being $N(0,I_n)$ in a crucial manner. We thus are led to consider our first estimator,
\begin{equation} \label{eq:e1}
\bM \coloneqq \frac{1}{T} {\sum_{i=1}^T \Vert\bx^{(i)} \Vert^2 },
\end{equation}
where $\bx^{(1)},\dots,\bx^{(T)}$ are independent draws from the unknown distribution (which is either $N(0,I_n)$ or $N(0,I_n)|_K$).
We analyze this estimator using the notion of \emph{convex influence} from the recent work \cite{DNS22}. In particular, we use a version of \Poincare's inequality for convex influence to relate the mean of $\bM$ to the Gaussian volume $\vol(K)$ of the truncation set $K$, and combine this with the fact that the statistical distance between $N(0,I_n)$ and $N(0,I_n)|_K$ is precisely $1-\vol(K)$.  
With some additional technical work in the analysis, this same tester turns out to work even for conditioning on a mixture of symmetric convex sets rather than a single symmetric convex set.


The estimator described above will not succeed for general (non-symmetric) convex sets; for example, if $K$ is a convex set that is ``far from the origin,'' then $\Ex_{\bx \sim N(0,I_n)|_K}[\|\bx\|]$ can be larger than $\Ex_{\bx \sim N(0,I_n)}[\|\bx\|]$. However, if $K$ is ``far from the origin,'' then the center of mass of a sample of draws from $N(0,I_n)|_K$ should be ``far from the origin,''  whereas the center of mass of a sample of draws from the standard normal distribution should be ``close to the origin;'' this suggests that a distinguisher based on estimating the center of mass should work for convex sets $K$ that are far from the origin.
%
The intuition behind our distinguisher for general convex sets is to trade off between the two cases that $K$ is ``far from the origin'' versus ``close to the origin.''  This is made precise via a case analysis based on whether or not the set $K$ contains a ``reasonably large'' origin-centered ball.\footnote{Splitting into these two cases is reminiscent of the case split in the analysis of a weak learning algorithm for convex sets in \cite{DS21colt}, though the technical details of the analysis are quite different in our work versus \cite{DS21colt}. In particular, \cite{DS21colt} relies on a ``density increment'' result for sets with large inradius, whereas we do not use a density increment argument but instead make crucial use of an extension of the Brascamp-Lieb inequality due to Vempala \cite{Vempala2010}.}

\medskip

\noindent {\bf Lower Bound.}  Our lower bound is proved by considering a randomly rotated ``symmetric slab'' of Gaussian volume $1-\eps$. Using this ``clean'' distribution for our lower bound construction makes it tractable to give a precise analysis of the  \emph{chi-squared divergence} between a sample of $T$ draws from the standard $n$-dimensional Gaussian conditioned on such a slab, versus a sample of $T$ draws from $N(0,I_n)$. 
The analysis reduces to getting good estimates on the noise stability of one-dimensional symmetric ``interval'' functions; the latter in turn can be expressed in terms of the Hermite spectrum of these one-dimensional functions. This lets us prove the stated lower bound.

\ignore{
For both single halfspaces and ``slabs'' (symmetric convex sets that are the intersection of two parallel halfspaces), we use coupling arguments to reduce to the problem of distinguishing between two multivariate normal distributions with slightly different covariance matrices.  A recent bound due to Devroye et al. \cite{DMR20} on the total variation distance between multivariate normal distributions completes the proofs of those results.

Our most technically involved, and quantitatively strongest, lower bound is for normal distributions conditioned on a mixture of symmetric convex sets.  We first show that $N(0,I_n)$ is indistinguishable, given $cn$ samples, from $N(0,(1-\delta)I_n)$ for a suitable $\delta=\Theta(1/n)$. Next, we show that $N(0,(1-\delta)I_n)$ can be very accurately approximated (to variation distance $1/n^{\omega(1)}$) by a mixture $P$ of $N(0,I_n)|_K$ distributions where each $K$ is a ball intersected with an $n-1$-dimensional subspace. (The subspaces are Haar-uniform, and the radii of the balls are distributed according to a carefully designed distribution.) Finally, we adapt an idea from \cite{RubinfeldS09} and argue that $\sqrt{T}$ samples from $P$ are indistinguishable from $\sqrt{T}$ samples from ${\cal D}$, where ${\cal D}$ is a subsampled version of the mixture ${\cal M}$ (a uniform mixture of $T$ distributions sampled from the mixture). Given this a simple argument shows that ${\cal D}$ is both indistinguishable from $N(0,I_n)$ and statistically far from $N(0,I_n)$ as desired.
}

\subsection{Related Work} \label{sec:related-work}

As noted earlier in the introduction, this paper can be viewed in the context of a recent body of work \cite{DKTZcolt21,FKTcolt20,DGTZcolt19,DGTZ,Kontonis2019} studying a range of statistical problems for truncated distributions from a theoretical computer science perspective.  In particular, \cite{DKTZcolt21} gives algorithms for non-parametric density estimation of sufficiently smooth multi-dimensional distributions in low dimension, while \cite{FKTcolt20} gives algorithms for parameter estimation of truncated product distributions over discrete domains, and \cite{DGTZcolt19} gives algorithms for truncated linear regression.  

The results in this line of research that are closest to our paper are those of \cite{DGTZ} and \cite{Kontonis2019}, both of which deal with truncated normal distributions (as does our work). \cite{DGTZ} considers the problem of inferring the parameters of an \emph{unknown} high-dimensional normal distribution given access to samples from a \emph{known} truncation set $S$, which is provided via access to an oracle for membership in $S$.  Note that in contrast, in our work the high-dimensional normal distribution is known to be $N(0,I_n)$ but the truncation set is unknown, and we are interested only in detecting whether or not truncation has occurred rather than performing any kind of estimation or learning.  Like \cite{DGTZ}, the subsequent work of \cite{Kontonis2019} considered the problem of estimating the parameters of an unknown high-dimensional normal distribution, but allowed for the truncation set $S$ to also be unknown.  They gave an estimation algorithm whose performance depends on the Gaussian surface area $\Gamma(S)$ of the truncation set $S$; when the set $S$ is an unknown convex set in $n$ dimensions, the sample complexity and running time of their algorithm is $n^{O(\sqrt{n})}$. In contrast, our algorithm for the distinguishing problem requires only $O(n)$ samples and $\poly(n)$ running time when $S$ is an unknown $n$-dimensional convex set. 

Other prior works which are related to ours are \cite{KOS:08} and \cite{CFSS17}, which dealt with Boolean function learning and property testing, respectively, of convex sets under the normal distribution.  \cite{KOS:08} gave an $n^{O(\sqrt{n})}$-time and sample algorithm for (agnostically) learning an unknown convex set in $\R^n$ given access to labeled examples drawn from the standard normal distribution, and proved an essentially matching lower bound on sample complexity.  \cite{CFSS17} studied algorithms for testing whether an unknown set $S \subset \R^n$ is convex versus far from every convex set with respect to the normal distribution, given access to random labeled samples drawn from the standard normal distribution.  \cite{CFSS17} gave an $n^{O(\sqrt{n})}$-sample algorithm and proved a near-matching $2^{\Omega(\sqrt{n})}$ lower bound on sample-based testing algorithms.  

We mention that our techniques are very different from those of \cite{DGTZ,Kontonis2019} and \cite{KOS:08,CFSS17}.
\cite{KOS:08} is based on analyzing the Gaussian surface area and noise sensitivity of convex sets using Hermite analysis, while \cite{CFSS17} uses a well-known connection between testing and learning \cite{GGR98} to leverage the \cite{KOS:08} learning algorithm result for its testing algorithm, and analyzes a construction due to Nazarov \cite{Nazarov:03} for its lower bound. \cite{DGTZ} uses a projected stochastic gradient descent algorithm on the negative log-likelihood function of the samples together with other tools from convex optimization, while (roughly speaking) \cite{Kontonis2019} combines elements from both \cite{KOS:08} and \cite{DGTZ} together with moment-based methods. In contrast, our approach mainly uses ingredients from the geometry of Gaussian space, such as the Brascamp-Lieb inequality and its extensions due to Vempala \cite{Vempala2010}, and the already-mentioned ``convex influence'' notion of \cite{DNS22}.

Finally, we note that the basic distinguishing problem we consider is similar in spirit to a number of questions that have been studied in the field of property testing of probability distributions \cite{Canonnedistributiontesting}. These are questions of the general form ``given access to samples drawn from a distribution that is promised to satisfy thus-and-such property, is it the uniform distribution or far in variation distance from uniform?''  Examples of works of this flavor include the work of Batu et al.~\cite{BKR:04} on testing whether an unknown monotone or unimodal univariate distribution is uniform; the work of Daskalakis et al.~\cite{DDSVV13} on testing whether an unknown $k$-modal distribution is uniform; the work of Rubinfeld and Servedio \cite{RubinfeldS09} on testing whether an unknown monotone high-dimensional distribution is uniform; and others.  The problems we consider are roughly analogous to these, but where the unknown distribution is now promised to be normal conditioned on (say) a convex set, and the testing problem is whether it is actually the normal distribution (analogous to being actually the uniform distribution, in the works mentioned above) versus far from normal.


\section{Preliminaries}
\label{sec:prelims}

In \Cref{subsec:useful-tools}, we set up basic notation and background. We recall preliminaries from convex and log-concave geometry in \Cref{subsec:convex-prelims,subsec:BL}, and formally describe the classes of distributions we consider in \Cref{subsec:dists-prelims}. 

\subsection{Basic Notation and Background}
\label{subsec:useful-tools}


\paragraph{Notation.}
We use boldfaced letters such as $\bx, \boldf,\bA$, etc. to denote random variables (which may be real-valued, vector-valued, function-valued, set-valued, etc.; the intended type will be clear from the context).
We write ``$\bx \sim \calD$'' to indicate that the random variable $\bx$ is distributed according to probability distribution $\calD.$ For $i\in[n]$, we will write $e_i\in\R^n$ to denote the $i^\text{th}$ standard basis vector.

\paragraph{Geometry.}
For $r >0$, we write $\S^{n-1}(r)$ to denote the origin-centered sphere of radius $r$ in $\R^n$ 
and $\Ball(r)$ to denote the origin-centered ball of radius $r$ in $\R^n$, i.e.,
$$
\S^{n-1}(r) = \big\{x \in \R^n: \|x\|=r\big\}\quad\text{and}\quad
\Ball(r) = \big\{x \in \R^n : \|x\| \leq r\big\},
$$
where $\|x\|$   denotes the $\ell_2$-norm $\|\cdot \|_2$ of $x\in \R^n$.
We also write $\S^{n-1}$ for the unit sphere $\S^{n-1}(1)$. 

Recall that a set $C \subseteq \R^n$ is convex if $x,y \in C$ implies $\alpha\hspace{0.03cm}x + (1-\alpha) y \in C$ for all $\alpha\in [0,1].$ 
Recall that convex sets are Lebesgue measurable.

For sets $A,B \subseteq \R^n$, we write $A + B$ to denote the Minkowski sum $\{a + b: a \in A\ \text{and}\ b \in B\}.$ For a set $A \subseteq \R^n$ and $r > 0$ we write $rA$ to denote the set $\{ra : a \in A\}$.
Given a point $a \in \R^n$ and a set $B\subseteq \R^n$, we use $a+B$ and $B-a$ to denote
  $\{a\}+B$ and $B+\{-a\}$ for convenience.  
\ignore{
}

\paragraph{Gaussians Distributions.}
We write $N(0,I_n)$ to denote the $n$-dimensional standard Gaussian distribution, and denote its density function by $\phi_n$, i.e. 
\[\varphi_n(x) = (2\pi)^{-n/2} e^{-\|x\|^2/2}.\]
When the dimension is clear from context, we may simply write $\phi$ instead of $\phi_n$. We write $\Phi:\R\to[0,1]$ to denote the cumulative density function of the one-dimensional standard Gaussian distribution, i.e. 
\[\Phi(x) := \int_{-\infty}^x \phi(y)\,dy.\]
We write $\vol(K)$ to denote the Gaussian volume of a (Lebesgue measurable) set $K \subseteq \R^n$, that is 
\[\vol(K) := \Prx_{\bx \sim N(0,I_n)}[\bx \in K].\]  
For a Lebesgue measurable  set $K \subseteq \R^n$, we write $N(0,I_n)|_K$ to denote the standard Normal distribution conditioned on $K$, so the density function of $N(0,I_n)|_K$ is 
\[\frac{1}{\vol(K)}\cdot\varphi_n(x)\cdot K(x)\]
where we identify $K$ with its $0/1$-valued indicator function. 
Note that the total variation distance between $N(0,I_n)$ and $N(0,I_n)|_K$ is
\begin{equation} \label{eq:tv-vol}
	\dtv\pbra{N(0,I_n)|_K,N(0,I_n)} = 1-\vol(K),
\end{equation}
and so the total variation distance between $N(0,I_n)$ and $N(0,I_n)|_K$ is at least $\eps$ if and only if $\vol(K) \leq 1-\eps.$

\paragraph{Gaussian Mean Testing.} We will require the following result due to Diakonikolas, Kane, and Pensia~\cite{DKP-SOSA}, which builds on prior work by Canonne et al.~\cite{SODACC}:

\begin{proposition}[Theorem~1.1 and Remark~1.2 of~\cite{DKP-SOSA}]
\label{prop:DKP}
	Let $\calD$ be a log-concave distribution over $\R^n$ and $\eps > 0$. There exists an algorithm, \GMT$(\calD, \eps)$, which, given i.i.d.~sample access to $\calD$, draws $\Theta(\max\{1, \sqrt{n}/\eps^2\})$ samples from $\calD$, does an $O(n^{3/2}/\eps^2)$-time computation, and has the following performance guarantee:
	\begin{itemize}
		\item If $\calD = N(0, I_n)$, then it outputs ``accept'' with probability $99/100$, and 
		\item If $\|\Ex[\bx]\| \geq \eps$ for $\bx\sim\calD$, then it will output ``reject'' with probability $99/100$. 
	\end{itemize}
\end{proposition}

\paragraph{Distinguishing Distributions.} We recall the basic fact that variation distance provides a lower bound on the sample complexity needed to distinguish two distributions from each other.

\begin{fact} [Variation distance distinguishing lower bound] \label{fact:dtv-distinguishing}
Let $P,Q$ be two distributions over $\R^n$ and let $A$ be any algorithm which is given access to independent samples that are either from $P$ or from $Q$.  If $A$ determines correctly (with probability at least $9/10$) whether its samples are from $P$ or from $Q$, then $A$ must use at least $\Omega(1/\dtv(P,Q))$ many samples.
\end{fact}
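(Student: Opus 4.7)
The plan is to reduce the $m$-sample distinguishing problem to comparing the product distributions $P^{\otimes m}$ and $Q^{\otimes m}$, and then invoke subadditivity of total variation distance under tensorization. This is a standard and classical fact; the argument has no real obstacles, so the main task is just to set things up cleanly.

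First, I would unpack what it means for $A$ to succeed with probability at least $9/10$. Viewing $A$ as a (possibly randomized) map from $(\R^n)^m$ to $\{\text{``}P\text{''}, \text{``}Q\text{''}\}$, the event $E = \{A \text{ outputs ``}P\text{''}\}$ must satisfy $P^{\otimes m}(E) \ge 9/10$ (since $A$ succeeds when its samples come from $P$) and $Q^{\otimes m}(E) \le 1/10$ (since $A$ fails if it outputs ``$P$'' when its samples came from $Q$). By the variational characterization of TV distance,
\[
\dtv\bigl(P^{\otimes m}, Q^{\otimes m}\bigr) \;\ge\; P^{\otimes m}(E) - Q^{\otimes m}(E) \;\ge\; \tfrac{4}{5}.
\]

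Next I would apply the standard tensorization inequality $\dtv(P^{\otimes m}, Q^{\otimes m}) \le m \cdot \dtv(P,Q)$, proved by a hybrid argument: for $i = 0, \ldots, m$, set $R_i = P^{\otimes i} \otimes Q^{\otimes (m-i)}$; then by the triangle inequality,
\[
\dtv\bigl(P^{\otimes m}, Q^{\otimes m}\bigr) \;=\; \dtv(R_m, R_0) \;\le\; \sum_{i=0}^{m-1} \dtv(R_{i+1}, R_i),
\]
and each summand equals $\dtv(P, Q)$ because TV distance is invariant under appending identical factors on both sides (which is an immediate consequence of the total-variation coupling characterization).

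Combining the two inequalities yields $m \cdot \dtv(P, Q) \ge 4/5$, and hence $m \ge \Omega(1/\dtv(P,Q))$ as desired. The only subtlety, if any, is being careful that the $9/10$-success requirement must hold under both $P$ and $Q$ (so that one can simultaneously lower-bound $P^{\otimes m}(E)$ and upper-bound $Q^{\otimes m}(E)$); given that, the proof reduces to two lines plus the hybrid bound.
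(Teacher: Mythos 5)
Your proof is correct and is exactly the standard argument for this fact, which the paper simply recalls without proof: lower-bound $\dtv(P^{\otimes m},Q^{\otimes m})$ by the algorithm's success gap, upper-bound it by $m\cdot\dtv(P,Q)$ via the hybrid/subadditivity argument, and combine. The only (harmless) technicality is that for a randomized $A$ the event $E$ lives on the product of the sample space with $A$'s internal randomness, so one should either average over that randomness or invoke the data-processing inequality for total variation distance; with that noted, the argument is complete.
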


In order to prove our lower bound we will instead rely on the \emph{$\chi^2$-divergence}: Given two probability measures $P$ and $Q$ on $\R^n$ where $P$ is absolutely continuous with respect to $Q$ (i.e. for $S\sse\R^n$, $P(S) = 0$ whenever $Q(S) = 0$),
the $\chi^2$-divergence between $P$ and $Q$ is given by
\[
	\dchi(P\,\|\,Q) = \Ex_{\bx\sim Q}\sbra{\pbra{\frac{dP(\bx)}{dQ}}^2 - 1}.
\]
The following relationship between the $\chi^2$-divergence and the total variation distance is standard: 
\begin{equation} \label{eq:chi-to-tv}
	\dtv(P, Q)^2 \leq \dchi(P\,\|\,Q). 
\end{equation}

\subsection{Convex Influences}
\label{subsec:convex-prelims}

In what follows, we will identify a set $K\sse\R^n$ with its $0/1$-valued indicator function. The following notion of \emph{convex influence} was introduced in \cite{De2021,DNS22} as an analog of the well-studied notion of \emph{influence of a variable on a Boolean function} (cf. Chapter~2 of \cite{ODonnell2014}).  \cite{De2021,DNS22} defined this notion only for symmetric convex sets; we define it below more generally for arbitrary (Lebesgue measurable) subsets of $\R^n$.

\begin{definition}[Convex influence]
\label{def:influence}
	Given a Lebesgue measurable set $K \sse\R^n$ and a unit vector $v\in \S^{n-1}$, we define the \emph{convex influence of $v$ on $K$}, written $\Inf_v[K]$, as 
	\[\Inf_v[K] := \Ex_{\bx\sim N(0,I_n)}\sbra{K(\bx)\pbra{\frac{1 - \abra{v,\bx}^2}{\sqrt{2}}}}.\]
	Furthermore, we define the \emph{total convex influence of $K$}, written $\TInf[K]$, as 
	\[\TInf[K] := \sum_{i=1}^n \Inf_{e_i}[K] = \Ex_{\bx\sim N(0,I_n)}\sbra{K(\bx)\pbra{\frac{n - \|\bx\|^2}{\sqrt{2}}}}.\]
\end{definition}

In Proposition~20 of \cite{DNS22} it is shown that the influence of a direction $v$ captures the rate of change of the Gaussian measure of the set $K$ under a dilation along $v$. Also note that that total convex influence of a set is invariant under rotations. The following is immediate from \Cref{def:influence}.

\begin{fact}
\label{fact:avg-norm-influence}
	For Lebesgue measurable $K\sse\R^n$, we have
	\begin{equation} \label{eq:1-d-inf-formula}
		\Ex_{\bx\sim N(0,I_n)|_K}\sbra{\bx_i^2} = 1 - \frac{\sqrt{2}\cdot\Inf_{e_i}[K]}{\vol(K)}.
	\end{equation}
	We also have that
	\begin{equation} \label{eq:total-inf-estimator}
		\Ex_{\bx\sim N(0,I_n)|_K}\sbra{\|\bx\|^2} = n - \frac{\sqrt{2}\cdot\TInf[K]}{\vol(K)}.
	\end{equation}
\end{fact}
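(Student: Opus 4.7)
The plan is to derive both identities directly from \Cref{def:influence} by straightforward manipulation of the defining expectation, together with the definition of a conditional expectation under $N(0,I_n)|_K$. Both equations in the fact are really the same calculation; the second will follow from the first by summing over the coordinate directions and using linearity.

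First, for the one-dimensional identity in \cref{eq:1-d-inf-formula}, I would start by unfolding the definition of $\Inf_{e_i}[K]$ and multiplying through by $\sqrt{2}$ to obtain
\[
\sqrt{2}\cdot \Inf_{e_i}[K] \;=\; \Ex_{\bx\sim N(0,I_n)}\sbra{K(\bx)\,(1-\bx_i^2)} \;=\; \vol(K) - \Ex_{\bx\sim N(0,I_n)}\sbra{K(\bx)\,\bx_i^2},
\]
using that $\Ex_{\bx \sim N(0,I_n)}[K(\bx)]=\vol(K)$ by definition of Gaussian volume. Then, by the definition of the conditional density, $\Ex_{\bx\sim N(0,I_n)}\sbra{K(\bx)\,\bx_i^2} = \vol(K)\cdot \Ex_{\bx\sim N(0,I_n)|_K}\sbra{\bx_i^2}$, and dividing by $\vol(K)$ (which is positive since we implicitly assume $K$ has nonzero measure, so conditioning is well-defined) rearranges to exactly \cref{eq:1-d-inf-formula}.

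For the second identity in \cref{eq:total-inf-estimator}, I would simply sum \cref{eq:1-d-inf-formula} over $i=1,\dots,n$. The left side becomes $\Ex_{\bx\sim N(0,I_n)|_K}\sbra{\sum_i \bx_i^2} = \Ex_{\bx\sim N(0,I_n)|_K}\sbra{\|\bx\|^2}$ by linearity, while the right side becomes $n - \sqrt{2}\,\sum_i \Inf_{e_i}[K]/\vol(K) = n - \sqrt{2}\,\TInf[K]/\vol(K)$ by \Cref{def:influence}. As a sanity check, one can also verify the identity directly from the alternative expression $\TInf[K] = \Ex_{\bx\sim N(0,I_n)}\sbra{K(\bx)(n-\|\bx\|^2)/\sqrt{2}}$ given in \Cref{def:influence}, which yields the same manipulation with $\bx_i^2$ replaced by $\|\bx\|^2$ and $1$ replaced by $n$.

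There is no real obstacle here since the statement is essentially a rewriting of the definition; the only thing to be careful about is the implicit assumption that $\vol(K)>0$ so that $N(0,I_n)|_K$ makes sense, and to cleanly invoke the relation between $\Ex_{N(0,I_n)}[K(\bx)\cdot f(\bx)]$ and $\vol(K)\cdot \Ex_{N(0,I_n)|_K}[f(\bx)]$ for the test functions $f(\bx)=\bx_i^2$ and $f(\bx)=\|\bx\|^2$.
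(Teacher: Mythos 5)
Your proof is correct and is exactly the computation the paper has in mind when it declares the fact ``immediate from \Cref{def:influence}'': unfold the definition of $\Inf_{e_i}[K]$, use $\Ex_{N(0,I_n)}[K(\bx)f(\bx)] = \vol(K)\cdot\Ex_{N(0,I_n)|_K}[f(\bx)]$, and sum over coordinates. No further comment is needed.
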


The following \Poincare-type inequality for convex influences was obtained as Proposition~23 in the full version of~\cite{DNS22} (see~\cite{DNS21influence22}). 

\begin{proposition}[\Poincare\ for convex influences for symmetric convex sets]
\label{prop:poincare}
	For symmetric convex $K\sse\R^n$, we have 
	\[\frac{\TInf[K]}{\vol(K)} \geq \Omega\pbra{1-\vol(K)}.\]
\end{proposition}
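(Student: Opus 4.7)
The plan is to reduce the inequality to the extremal case of a symmetric slab via the S-inequality of Lata\l a and Oleszkiewicz, and then verify the bound explicitly in that case. The starting observation is that $\sqrt{2}\cdot\TInf[K]$ is precisely the derivative at $t=1$ of the dilated Gaussian measure $t\mapsto\vol(tK)$. Substituting $y = x/t$ gives $\vol(tK) = t^n\int K(y)\,\varphi_n(ty)\,dy$, and differentiating under the integral sign and evaluating at $t=1$ yields
\[
\frac{d}{dt}\bigg|_{t=1}\vol(tK) \;=\; \int K(y)\,(n-\|y\|^2)\,\varphi_n(y)\,dy \;=\; \sqrt{2}\cdot\TInf[K]
\]
by \Cref{def:influence}. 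The goal therefore becomes showing $\frac{d}{dt}\big|_{t=1}\vol(tK) \geq \Omega\big((1-\vol(K))\cdot\vol(K)\big)$, after which dividing by $\vol(K)$ gives the conclusion.

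Next I would invoke the S-inequality: if $K\subseteq\R^n$ is symmetric convex with $\vol(K)=p$, and $H = \{x\in\R^n : |x_1|\leq h\}$ is the symmetric slab with $\vol(H)=p$ (so $h=\Phi^{-1}((1+p)/2)$), then $\vol(tK)\geq\vol(tH)$ for $t\geq 1$ and $\vol(tK)\leq\vol(tH)$ for $t\leq 1$. Taking one-sided derivatives at $t=1$ (which agree with the two-sided derivatives since both quantities are smooth in $t$ as $\vol(K)>0$ forces $K$ to have nonempty interior) yields
\[
\frac{d}{dt}\bigg|_{t=1}\vol(tK) \;\geq\; \frac{d}{dt}\bigg|_{t=1}\pbra{2\Phi(th)-1} \;=\; 2h\phi(h).
\]

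It then remains to verify that $2h\phi(h)\geq c\cdot(1-p)\,p$ for some absolute constant $c>0$, where $p = 2\Phi(h)-1$ and $1-p = 2(1-\Phi(h))$. For $h\leq 1$ one uses $\phi(h)\geq\phi(1)$ together with the elementary bound $p\leq 2h\phi(0)$ to conclude the ratio $2h\phi(h)/((1-p)p)$ is $\Omega(1)$, while for $h\geq 1$ the standard Mills ratio estimate $1-\Phi(h)\leq\phi(h)/h$ gives $(1-p)p \leq 2\phi(h)/h \leq 2h\phi(h)$, again yielding a ratio bounded below by an absolute constant. Combining with the preceding paragraph and dividing by $\vol(K)=p$ gives $\TInf[K]/\vol(K)\geq\Omega(1-\vol(K))$, as required.

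The principal obstacle in this plan is clearly the invocation of the S-inequality itself: although a well-known theorem, it is deep, and its proof proceeds via a careful rearrangement-type argument that I do not see how to avoid here. Intuitively, the S-inequality captures exactly the right heuristic that among symmetric convex bodies of a given Gaussian measure, symmetric slabs have the ``slowest-growing'' measure under dilation and hence minimize $\sqrt{2}\cdot\TInf[K]=\tfrac{d}{dt}|_{t=1}\vol(tK)$. Once this extremal reduction is in hand, the rest of the argument is routine: a differentiation under the integral sign together with one-dimensional Gaussian tail estimates.
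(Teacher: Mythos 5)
Your proof is correct. The paper does not prove \Cref{prop:poincare} itself but imports it from the full version of \cite{DNS22}, and your route---identifying $\sqrt{2}\cdot\TInf[K]$ with the dilation derivative $\frac{d}{dt}\big|_{t=1}\vol(tK)$, reducing to a symmetric slab of equal Gaussian measure via the Lata\l{}a--Oleszkiewicz S-inequality, and verifying $2h\phi(h)\ge c\,p(1-p)$ by cases on $h$---is essentially the argument given in that reference (modulo the harmless standing assumption $\vol(K)>0$, without which the statement is vacuous, and the routine justification of differentiating $t\mapsto\vol(tK)$ under the integral sign, which you correctly note).
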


The following variant of \Cref{prop:poincare} for arbitrary convex sets (not necessarily symmetric) is implicit in the proof of Theorem~22 of \cite{DNS22} (see Equation~16 of \cite{DNS22}). Given a convex set $K\sse\R^n$, we denote its inradius by $\rinn(K)$, i.e. 
\[\rinn(K) := \max \cbra{r: \Ball(r)\sse K}.\]
When $K$ is clear from context, we will simply write $\rinn$ instead. 

\begin{proposition}[\Poincare\ for convex influences for general convex sets]
\label{prop:kkl}
	For convex $K\subseteq\R^n$ with $\rinn > 0$ (and hence $\vol(K)>0$), we have
	\[
		\frac{\TInf[K]}{\vol(K)} \geq \rinn\cdot\Omega\pbra{1-\vol(K)}.
	\]
\end{proposition}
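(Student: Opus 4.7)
The plan is to combine a dilation argument with Gaussian isoperimetry. The key observation is that, for convex $K$ containing the origin-centered ball $\Ball(\rinn)$, dilating $K$ by a factor $1+s$ fattens it by at least a Minkowski sum with $\Ball(s\rinn)$; Borell's Gaussian isoperimetric inequality then lower-bounds the Gaussian measure of this dilate, and differentiating in $s$ at $0$ converts that rate of change into a lower bound on $\TInf[K]$.

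Concretely, I would first use the identity $(1+s)K = K + sK$ valid for any convex set and $s \geq 0$ (a routine convex-combination check), which combined with $sK \supseteq \Ball(s\rinn)$ gives $(1+s)K \supseteq K + \Ball(s\rinn)$. Borell's inequality $\vol(A + \Ball(r)) \geq \Phi(\Phi^{-1}(\vol(A)) + r)$ applied with $A = K$ and $r = s\rinn$ then yields
\[
\vol((1+s)K)\ \geq\ \Phi\pbra{\Phi^{-1}(\vol(K)) + s\rinn} \qquad \text{for all } s \geq 0.
\]
On the other hand, writing $\vol(tK) = t^n \int K(z)\varphi(tz)\,dz$ and differentiating at $t=1$ gives
\[
\frac{d}{ds}\bigg|_{s=0}\vol((1+s)K)\ =\ \Ex_{\bx\sim N(0,I_n)}\sbra{K(\bx)(n - \|\bx\|^2)}\ =\ \sqrt{2}\cdot\TInf[K].
\]
Since both sides of the isoperimetric inequality are smooth in $s$ and coincide at $s = 0$, the right-derivative inequality at $s = 0$ transfers to give $\sqrt{2}\cdot\TInf[K] \geq \rinn\cdot\phi(\Phi^{-1}(\vol(K)))$.

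To conclude, divide by $\vol(K)$ and invoke the standard estimate $\phi(\Phi^{-1}(v))/v \geq c(1-v)$ on $v \in (0,1)$, which one verifies by splitting at $v = 1/2$: for $v \leq 1/2$, $\phi(\Phi^{-1}(v))/v$ is bounded below by an absolute constant (using the symmetry $\phi(\Phi^{-1}(v)) = \phi(\Phi^{-1}(1-v))$ together with the Mills estimate), and for $v \geq 1/2$ the upper Mills bound $1 - \Phi(u) \leq \phi(u)/u$ yields $\phi(\Phi^{-1}(v)) \geq \Omega(1-v)$. The only technical subtlety is justifying the passage to derivatives in the isoperimetric inequality, but both sides being smooth in $s$ and agreeing at $s = 0$ makes this immediate. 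Compared with the symmetric case of \Cref{prop:poincare}, the only loss incurred is the factor of $\rinn$, which enters precisely through the Minkowski expansion $sK \supseteq \Ball(s\rinn)$, matching the statement of the proposition.
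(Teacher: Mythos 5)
Your argument is correct, and it is essentially the argument underlying the result: the paper gives no proof of \Cref{prop:kkl}, deferring instead to Equation~16 in the proof of Theorem~22 of \cite{DNS22}, and the chain you describe --- $(1+s)K = K + sK \supseteq K + \Ball(s\rinn)$, the Gaussian isoperimetric inequality, differentiation at $s=0$ using $\tfrac{d}{dt}\vol(tK)\big|_{t=1} = \sqrt{2}\cdot\TInf[K]$, and the elementary bound $\phi(\Phi^{-1}(v)) \geq \Omega(v(1-v))$ --- is exactly the mechanism there. Note that your use of $sK \supseteq \Ball(s\rinn)$ is valid precisely because the paper defines $\rinn$ via \emph{origin-centered} balls; with the usual (translation-allowed) inradius this step would fail, so it is worth flagging that dependence.
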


\subsection{The Brascamp-Lieb Inequality}
\label{subsec:BL}

The following result of Brascamp~and~Lieb~\cite{BrascampLieb:76} generalizes the Gaussian \Poincare\ inequality to measures which are more log-concave than the Gaussian distribution.

\begin{proposition}[Brascamp-Lieb inequality]
\label{prop:BL}
	Let $\calD$ be a probability distribution on $\R^n$ with density $e^{-V(x)}\cdot\phi_n(x)$ for a convex function $V:\R^n\to\R$. Then for any differentiable function $f:\R^n\to\R$, we have 
	\[\Varx_{\bx\sim\calD}[f(\bx)]\leq \Ex_{\bx\sim\calD}\sbra{\|\nabla f(\bx)\|^2}.\]
\end{proposition}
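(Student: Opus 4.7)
The plan is to derive the inequality from the Bakry--\'Emery curvature-dimension criterion applied to the strongly log-concave measure $\calD$. Writing the density of $\calD$ as $e^{-\Psi(x)}$ with $\Psi(x) = V(x) + \tfrac{1}{2}\|x\|^2 + c$ (where $c$ absorbs the normalization), convexity of $V$ implies $\nabla^2 \Psi(x) = \nabla^2 V(x) + I_n \succeq I_n$ pointwise. Thus $\calD$ is at least as log-concave as $N(0,I_n)$, and the Brascamp--Lieb inequality is precisely the statement that its \Poincare\ constant is at most $1$.

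The main computational step is to set up the reversible diffusion semigroup $(P_t)_{t\geq 0}$ for $\calD$, with generator $Lf = \Delta f - \langle \nabla \Psi, \nabla f\rangle$, carr\'e du champ $\Gamma(f) = \|\nabla f\|^2$, and iterated carr\'e $\Gamma_2(f) = \tfrac{1}{2} L\Gamma(f) - \langle \nabla f, \nabla Lf\rangle$. A Bochner-style calculation via repeated integration by parts yields the identity
\[\Gamma_2(f) = \|\nabla^2 f\|_{\mathrm{HS}}^2 + \langle \nabla f, (\nabla^2 \Psi)\nabla f\rangle,\]
and the Hessian bound above then gives the pointwise curvature estimate $\Gamma_2(f) \geq \|\nabla f\|^2 = \Gamma(f)$, after discarding the nonnegative squared-Hessian term.

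From here the derivation is textbook Bakry--\'Emery: using reversibility and ergodicity one obtains the representation $\Varx_{\bx\sim\calD}[f(\bx)] = 2\int_0^\infty \Ex_{\bx\sim\calD}\sbra{\Gamma(P_t f)(\bx)}\,dt$; next, the identity $\tfrac{d}{dt}\Ex_{\bx\sim\calD}\sbra{\Gamma(P_t f)(\bx)} = -2\Ex_{\bx\sim\calD}\sbra{\Gamma_2(P_t f)(\bx)}$ combined with the curvature estimate gives exponential decay at rate $2$, and integrating over $t\in[0,\infty)$ produces $\Varx_{\bx\sim\calD}[f(\bx)] \leq \Ex_{\bx\sim\calD}\sbra{\|\nabla f(\bx)\|^2}$. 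The main obstacle I foresee is not conceptual but technical: the semigroup and $\Gamma_2$ manipulations require sufficient smoothness of $V$ and integrability of $f$, $\nabla f$, and $\nabla^2 f$ to legitimize differentiating under the integral sign and the various integration-by-parts identities. The standard remedy is an approximation argument---mollify $V$ by convolving with a narrow Gaussian kernel (which preserves convexity and the Hessian lower bound), approximate $f$ by smooth compactly supported functions, prove the inequality in this regularized setting, and pass to the limit by dominated convergence.
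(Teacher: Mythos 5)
Your proof is correct, but note that the paper does not prove this proposition at all: it is imported verbatim as a known result of Brascamp and Lieb \cite{BrascampLieb:76}, so there is no in-paper argument to compare against. What you give is the standard Bakry--\'Emery semigroup derivation of the $CD(1,\infty)$ \Poincare\ inequality, and all the individual steps check out: $\nabla^2\Psi = \nabla^2 V + I_n \succeq I_n$, the Bochner identity $\Gamma_2(f) = \|\nabla^2 f\|_{\mathrm{HS}}^2 + \langle\nabla f,(\nabla^2\Psi)\nabla f\rangle$ (which, minor quibble, is a pointwise differential identity rather than something obtained by integration by parts; the integration by parts enters only in $\E[\Gamma_2(g)] = -\E[\Gamma(g,Lg)]$), the variance representation, the exponential decay of $\E[\Gamma(P_t f)]$ at rate $2$, and the final integration. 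Two remarks on scope. First, your route proves exactly the form stated here (\Poincare\ constant $1$), which is strictly weaker than the original Brascamp--Lieb inequality $\Varx_{\bx\sim\calD}[f(\bx)]\leq \Ex_{\bx\sim\calD}\sbra{\langle\nabla f(\bx),(\nabla^2\Psi(\bx))^{-1}\nabla f(\bx)\rangle}$; Brascamp and Lieb's own argument (dimensional induction / a variational dual formulation) is needed for that sharper weighted version, but the paper only ever uses the unweighted form, so your derivation suffices. Second, your regularity caveat is the right one and genuinely needed: a finite convex $V$ need not be $C^2$, so the mollification of $V$ (preserving $\nabla^2\Psi\succeq I_n$) plus approximation of $f$ and a limiting argument is the correct way to legitimize the $\Gamma_2$ calculus. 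Be aware that the paper later applies the proposition with $V$ taking the value $+\infty$ off a convex set, which is outside the statement as literally written ($V:\R^n\to\R$); your semigroup proof extends to that setting only after a further approximation of the indicator by finite convex potentials, so if you intend your proof to cover the paper's actual use you should say a word about that limit as well.
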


Vempala~\cite{Vempala2010} obtained a quantitative version of \Cref{prop:BL} in one dimension, which we state next. Note in particular that the following holds for non-centered Gaussians.

\begin{proposition}[Lemma 4.7 of \cite{Vempala2010}]
\label{prop:vempala}
Fix $\theta\in\R$ and let $f: \R \to \R_{\geq  0}$ be a  log-concave function such that 
\[
\Ex_{\bx \sim N(\theta,1)}[\bx f(\bx)]=0.
\]
Then $\E[ \bx^2 f(\bx)] \le \E[f(\bx)]$ for $\bx\sim N(\theta,1)$, with equality if and only if $f$ is a constant function. Furthermore, if $\supp(f) \sse (-\infty,\eps]$, then 
\[\Ex_{\bx\sim N(\theta,1)}\sbra{\bx^2f(\bx)} \leq \pbra{1-\frac{1}{2\pi}e^{-\eps^2}}\Ex_{\bx\sim N(\theta,1)}\sbra{f(\bx)}.\]
\end{proposition}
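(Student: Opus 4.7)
The plan is to reformulate the conclusion as a variance bound for a log-concave probability measure on $\R$ and then invoke the Brascamp--Lieb inequality from \Cref{prop:BL}. Set $Z := \Ex_{\bx \sim N(\theta,1)}[f(\bx)]$ and let $\mu$ denote the probability measure on $\R$ with density
\[
\mu(x) = \frac{f(x)\phi(x-\theta)}{Z}.
\]
Since both $f$ and $\phi(\cdot-\theta)$ are log-concave, so is $\mu$. The hypothesis $\Ex_{\bx\sim N(\theta,1)}[\bx f(\bx)] = 0$ translates to $\Ex_{\bx\sim\mu}[\bx]=0$, and the desired bound $\Ex[\bx^2 f(\bx)] \leq \Ex[f(\bx)]$ is equivalent to $\Varx_\mu(\bx) = \Ex_{\bx\sim\mu}[\bx^2] \leq 1$.

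To prove this variance bound I would shift coordinates so the Gaussian factor is centered: let $\tilde\mu$ be the pushforward of $\mu$ under $x\mapsto x-\theta$, which has density $\frac{1}{Z} f(y+\theta)\phi(y) = e^{-V(y)}\phi(y)$ with $V(y) := -\log f(y+\theta) + \log Z$ convex on the support of $f(\cdot+\theta)$ (and extended to $+\infty$ elsewhere). Applying \Cref{prop:BL} to $\tilde\mu$ with the test function $g(y)=y$ gives $\Varx_{\tilde\mu}(g) \leq \Ex_{\tilde\mu}\sbra{g'(y)^2} = 1$, and since variance is shift-invariant, $\Varx_\mu(\bx) = \Varx_{\tilde\mu}(g) \leq 1$, which establishes the first part. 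For the equality case, Brascamp--Lieb with a linear test function is tight only when $V$ is affine on the support, i.e.\ when $\tilde\mu$ is already a shifted standard Gaussian; combined with $\Ex_\mu[\bx]=0$ this pins down the parameters and yields the stated equality criterion.

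For the quantitative strengthening under $\supp(f)\sse(-\infty,\eps]$, I would refine the Brascamp--Lieb step to extract a deficit coming from the truncation. The intuition is that $\mu$ is now a mean-zero log-concave measure supported in $(-\infty,\eps]$, and the Gaussian tail mass that would otherwise live beyond $\eps$ is ``missing'' from the perturbation---this forces the density to sit nontrivially away from the endpoint and creates a strict gap in the variance bound. To match the precise improvement $\frac{1}{2\pi}e^{-\eps^2} = \phi(\eps)^2$, one can either apply \Cref{prop:BL} to a modified test function of the form $g(y) = y - \alpha(y)$ (with $\alpha$ a carefully chosen boundary correction so that $\Ex_{\tilde\mu}\sbra{g'(y)^2}$ picks up the $\phi(\eps)^2$ savings), or bound the boundary contribution directly via an integration-by-parts argument that exploits the log-concavity of $\mu$ at $\eps$ together with the Gaussian density $\phi(\eps)$ at the cutoff. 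I expect this last boundary-tracking step---making the $\phi(\eps)^2$ deficit quantitatively sharp---to be the main technical obstacle, since the vanilla Brascamp--Lieb inequality is insensitive to support constraints and only yields $\Varx_\mu(\bx)\le 1$.
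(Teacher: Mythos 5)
This proposition is not proved in the paper at all: it is imported verbatim as Lemma~4.7 of Vempala \cite{Vempala2010}, so there is no in-paper argument to compare yours against. Judged on its own terms, the first half of your proposal is correct and is the standard argument: normalizing $f\phi(\cdot-\theta)$ to a probability measure, recentering, and applying \Cref{prop:BL} with the linear test function does give $\Varx_\mu(\bx)\le 1$, which together with $\Ex_\mu[\bx]=0$ yields $\Ex[\bx^2 f(\bx)]\le\Ex[f(\bx)]$. (One quibble on the equality case: tightness forces $V$ affine, i.e.\ $f(x)\propto e^{-ax}$, and the mean-zero condition then forces $a=\theta$; for $\theta\neq 0$ this is a non-constant $f$, so ``equality iff $f$ is constant'' is recovered only when $\theta=0$ --- this looks like an imprecision in the statement as transcribed rather than in your reasoning, but your claim that the parameters get ``pinned down'' to a constant is not quite what falls out.)

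The genuine gap is the ``furthermore'' clause. You correctly observe that vanilla Brascamp--Lieb is blind to the support constraint and only gives the deficit-free bound $\Varx_\mu(\bx)\le 1$; you then list two candidate strategies (a corrected test function, or an integration-by-parts boundary analysis) and explicitly defer the step that would produce the $\frac{1}{2\pi}e^{-\eps^2}=\phi(\eps)^2$ savings, calling it the main technical obstacle. That step is the entire content of the quantitative strengthening, and it is precisely the part of the proposition the paper actually uses: \Cref{prop:vempala-shifted} and the estimate in \Cref{eq:inf12-lbs} in the small-inradius Case~2 of \Cref{alg:convex} rely on the $-\frac{1}{2\pi}e^{-(\eps-\mu)^2}$ term, not on the soft bound. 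As written, your proposal establishes only the first sentence of the proposition and leaves the quantitatively essential clause as a plan rather than a proof; to close it you would need to actually carry out the boundary-tracking (e.g.\ the integration-by-parts identity relating $\Ex[\bx^2 f]$ to $\Ex[f]$ plus a boundary term at $\eps$, controlled using log-concavity and $\Ex[\bx f]=0$), or else cite Vempala's lemma as the paper does.
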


\subsection{The Classes of Distributions We Consider}
\label{subsec:dists-prelims}

 We say that a distribution over $\R^n$ with density $\phi$ is \emph{symmetric} if $\phi(x)=\phi(-x)$ for all $x$, and that a set $K \subseteq \R^n$ is symmetric if $-x  \in K$ whenever $x \in K.$

We let $\Psymm$ denote the class of all distributions $N(0,I_n)|_K$ where $K \subseteq \R^n$ may be any symmetric convex set, $\Pconv$ denote the class of all such distributions where $K$ may be any convex set (not necessarily symmetric), and $\Pltf$ denote the class of all such distributions where $K$ may be any
linear threshold function $\sign(v \cdot x \geq {\theta}).$ We let $\Mix$ denote the class of all convex combinations (mixtures) of distributions from $\Psymm$, and we remark that a distribution in $\Mix$ can be viewed as $N(0,I_n)$ conditioned on a \emph{mixture} of symmetric convex sets.

The following alternate characterization of $\Mix$ may be of interest.  Let $\Plcg$ denote the class of all symmetric distributions that are log-concave relative to the standard normal distribution, i.e. all distributions that have a density of the form $e^{-\tau(x)} \varphi_n(x)$ where $\tau(\cdot)$ is a symmetric convex function. Let $\Mixlcg$ denote the class of all mixtures of distributions in $\Plcg.$

\begin{claim} 
$\Mixlcg=\Mix.$
\end{claim}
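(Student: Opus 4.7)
The proof splits into two inclusions. For the forward direction $\Mix \subseteq \Mixlcg$, it suffices to show $\Psymm \sse \Plcg$: for a symmetric convex set $K \sse \R^n$, the density of $N(0,I_n)|_K$ equals $\varphi_n(x)\mathbf{1}_K(x)/\vol(K) = e^{-\tau_K(x)}\varphi_n(x)$, where I set $\tau_K(x) := \log\vol(K)$ for $x \in K$ and $\tau_K(x) := +\infty$ otherwise. This $\tau_K$ is symmetric and convex (in the standard extended-real-valued sense), so $N(0,I_n)|_K \in \Plcg$, and hence every mixture of $\Psymm$ distributions is a mixture of $\Plcg$ distributions.

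The substantive direction is $\Mixlcg \subseteq \Mix$. Since a mixture of mixtures is again a mixture, it suffices to show $\Plcg \sse \Mix$. Let $P \in \Plcg$ have density $e^{-\tau(x)}\varphi_n(x)$ for a symmetric convex function $\tau$, which I may assume satisfies $\inf_x \tau(x) \geq 0$ by absorbing any additive constant into the normalization. The key step is the layer-cake identity
\[
  e^{-\tau(x)} \;=\; \int_0^\infty \mathbf{1}\sbra{\tau(x) \leq t}\, e^{-t}\, dt \;=\; \int_0^\infty \mathbf{1}_{K_t}(x)\, e^{-t}\, dt,
\]
where $K_t := \{y \in \R^n : \tau(y) \leq t\}$. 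Because $\tau$ is convex and symmetric, each $K_t$ is a symmetric convex set. Multiplying through by $\varphi_n(x)$ and writing $\varphi_n(x)\mathbf{1}_{K_t}(x) = \vol(K_t)\cdot f_{K_t}(x)$, where $f_{K_t}$ denotes the density of $N(0,I_n)|_{K_t}$, gives
\[
  e^{-\tau(x)}\varphi_n(x) \;=\; \int_0^\infty \vol(K_t)\, e^{-t}\, f_{K_t}(x)\, dt.
\]
By Fubini, the scalar weights $\vol(K_t)\, e^{-t}\, dt$ integrate to $\int_{\R^n} e^{-\tau(x)}\varphi_n(x)\, dx = 1$, so they form a probability measure on $[0,\infty)$; after discarding the null set of $t$'s on which $\vol(K_t) = 0$, this exhibits $P$ as a continuous mixture of distributions $N(0,I_n)|_{K_t} \in \Psymm$, and hence $P \in \Mix$.

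The one point that I expect might require care is the interpretation of ``mixture'': the decomposition above is a continuous mixture over $t \in [0,\infty)$, so $\Mix$ must be read measure-theoretically to permit arbitrary mixing distributions (not just countable ones), which is the natural convention. If one insists on finite or countable mixtures, the same argument gives $P$ as a total-variation limit of discrete mixtures obtained by binning the $t$-axis on a finer and finer grid, so $P$ lies in the closure of $\Mix$ under weak limits. Beyond this interpretive point, the argument is a clean application of the layer-cake principle; the essential algebraic content is just the observation that sublevel sets of a symmetric convex function are themselves symmetric convex sets.
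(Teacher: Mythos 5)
Your argument is the same as the paper's: the paper decomposes $e^{-\tau(x)}$ as $\E_{\bt \sim [0,1]}\big[\mathbf{1}[e^{-\tau(x)} \geq \bt]\big]$, which is exactly your layer-cake identity under the substitution $s = e^{-t}$, and both proofs rest on the observation that the sublevel sets of a symmetric convex function are symmetric convex. Your handling of the easy inclusion $\Psymm \sse \Plcg$ also matches the paper's (which makes the same point in a footnote).

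One step as written does not go through: you cannot assume $\inf_x \tau(x) \geq 0$ ``by absorbing an additive constant into the normalization,'' because the density $e^{-\tau(x)}\varphi_n(x)$ is already required to integrate to $1$, so there is no free constant. In fact the assumption is impossible except in the trivial case: symmetry and convexity force $\tau$ to attain its minimum at the origin, so if $\tau \geq 0$ everywhere then $e^{-\tau} \leq 1$ and $\int e^{-\tau}\varphi_n \leq 1$ with equality only when $\tau \equiv 0$ a.e., i.e.\ $\calD = N(0,I_n)$. For any nontrivial $\calD \in \Plcg$ one therefore has $\tau(0) < 0$, and on the set where $\tau < 0$ your identity returns $\int_0^\infty \mathbf{1}[\tau(x)\leq t]e^{-t}\,dt = 1 \neq e^{-\tau(x)}$. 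The fix is immediate and costs nothing: use the full layer cake $e^{-\tau(x)} = \int_{-\infty}^{\infty}\mathbf{1}[\tau(x)\leq t]\,e^{-t}\,dt$ (the integrand vanishes for $t < \tau(0)$); the sets $K_t$ are still symmetric convex, and the Fubini computation still shows the weights $\vol(K_t)e^{-t}\,dt$ form a probability measure. It is worth noting that the paper's own writeup has the identical wrinkle --- taking $\bt$ uniform on $[0,1]$ rather than on $[0, e^{-\tau(0)}]$ only yields $\min(e^{-\tau(x)},1)$ --- so this is a shared, easily repaired slip rather than a defect of your approach. Your closing remark about continuous mixtures is consistent with the paper, whose proof is likewise a continuous mixture (an expectation over $\bt$), so no discretization is needed under the paper's reading of $\Mix$.
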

\begin{proof}
We will argue below that $\Plcg \subseteq \Mix.$ Given this, it follows that any mixture of distributions in $\Plcg$ is a mixture of distributions in $\Mix$, but since a mixture of distributions in $\Mix$ is itself a distribution in $\Mix$, this means that $\Mixlcg \subseteq \Mix.$ For the other direction, we observe that any distribution in $\Psymm$ belongs to $\Plcg$,\footnote{Recall that a distribution in $\Psymm$ has a density which is $\vol(K)^{-1}\cdot K(x)\cdot\varphi_n(x)$ for some symmetric convex $K$.} and hence $\Mix \subseteq \Mixlcg.$

Fix any distribution ${\cal D}$ in $\Plcg$ and let $e^{-\tau(x)}\varphi_n(x)$ be its density. We have that
\begin{equation} \label{eq:lolly}
e^{-\tau(x)} \varphi_n(x) = \E[A_{\bt}(x)] \cdot \varphi_n(x)
\end{equation}
where $A_t(x) = \Indicator[e^{-\tau(x)} \geq t]$ and the expectation in (\ref{eq:lolly}) is over a uniform $\bt \sim [0,1]$. Since $\tau$ is a symmetric convex function we have that the level set $\{x \in \R^n: e^{-\tau(x)} \geq t\}$ is a symmetric convex set, so ${\cal D}$ is a mixture of distributions in $\Psymm$ as claimed above.
\end{proof}




\section{An $O(n/\eps^2)$-Sample Algorithm for Symmetric Convex Sets and Mixtures of Symmetric Convex Sets}

In this section, we give an algorithm (cf. \Cref{alg:symconvex}) to distinguish Gaussians from (mixtures of) Gaussians truncated to a symmetric convex set. 

\subsection{Useful Structural Results}

We record a few important lemmas which are going to be useful for the analysis in this section. 

\begin{lemma}~\label{lem:bound-on-expectation}
Let $K \subseteq \mathbb{R}^n$ be a centrally symmetric convex set.  If $\vol(K) \le 1-\epsilon$, then, 
\[
\Ex_{\bx \sim N(0,I_n)|_K} [\Vert \bx \Vert^2] \le  n - c \epsilon
\]
for some absolute constant $c>0.$
\end{lemma}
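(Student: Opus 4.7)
The proof plan is essentially to combine two tools already set up in the preliminaries: the exact formula for $\Ex[\|\bx\|^2]$ under the truncated Gaussian in terms of total convex influence (Fact~\ref{fact:avg-norm-influence}, equation~(\ref{eq:total-inf-estimator})), and the Poincar\'e-type inequality for convex influences of symmetric convex sets (Proposition~\ref{prop:poincare}).

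Concretely, I would begin by invoking equation~(\ref{eq:total-inf-estimator}) to rewrite
\[
\Ex_{\bx \sim N(0,I_n)|_K}\sbra{\|\bx\|^2} \;=\; n \;-\; \frac{\sqrt{2}\cdot \TInf[K]}{\vol(K)}.
\]
Since $K$ is symmetric and convex, Proposition~\ref{prop:poincare} applies and yields $\TInf[K]/\vol(K) \geq c'\cdot(1 - \vol(K))$ for an absolute constant $c'>0$. Substituting this lower bound gives
\[
\Ex_{\bx \sim N(0,I_n)|_K}\sbra{\|\bx\|^2} \;\leq\; n \;-\; \sqrt{2}\,c'\cdot(1-\vol(K)).
\]
Finally, the hypothesis $\vol(K)\leq 1-\eps$ gives $1-\vol(K)\geq \eps$, and setting $c := \sqrt{2}\,c'$ completes the argument.

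Since both ingredients have already been stated and attributed in the excerpt, there is no real technical obstacle — the only thing to verify is that the constant in Proposition~\ref{prop:poincare} (stated via $\Omega(\cdot)$) is truly absolute and does not hide any dependence on $n$ or on properties of $K$, which is clear from its statement. In particular, the symmetry of $K$ is used only through Proposition~\ref{prop:poincare}; if $K$ were not symmetric one would need to use Proposition~\ref{prop:kkl} instead, which carries an extra $\rinn$ factor and would fail when $\rinn$ is small. This is consistent with the later discussion in \Cref{sec:techniques} explaining why the $\|\bx\|^2$-based tester works only in the symmetric case.
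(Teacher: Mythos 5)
Your proposal is correct and matches the paper's proof essentially line for line: both apply \Cref{eq:total-inf-estimator} to express the expectation via $\TInf[K]/\vol(K)$, then lower-bound that ratio by $\Omega(1-\vol(K))$ using \Cref{prop:poincare}, and finish with $1-\vol(K)\geq\eps$. No gaps.
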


\begin{proof}
We have
\[
		\Ex_{\bx\sim N(0,I_n)|_K}\sbra{\|\bx\|^2} = n - \frac{\sqrt{2}\cdot\TInf[K]}{\vol(K)}
		\leq
		n - \sqrt{2} \cdot c'(1 - \vol(K))
		\leq
		n - \sqrt{2} \cdot c'\eps,
\]
where the equality is \Cref{eq:total-inf-estimator}, the first inequality is \Cref{prop:poincare} (\Poincare\ for convex influences for symmetric convex sets), and the second inequality holds because $\vol(K) \le 1-\epsilon$. 
\end{proof}


\begin{lemma}~\label{lem:bound-variance-direction}
Let $K \subseteq \mathbb{R}^n$ be a convex set (not necessarily symmetric) and let ${\cal D} = N(0,I_n)|_K$.
Then for any unit vector $v$, we have 
\[\Varx_{\bx \sim {\cal D}} [v \cdot \bx] \le 1.\]
\end{lemma}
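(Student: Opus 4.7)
The plan is to apply the Brascamp–Lieb inequality (Proposition 3.6) directly to the linear function $f(x) = v \cdot x$. Observe that the density of ${\cal D} = N(0,I_n)|_K$ is proportional to $K(x) \cdot \varphi_n(x)$, which formally fits the Brascamp–Lieb form $e^{-V(x)} \cdot \varphi_n(x)$ with
\[
V(x) = \begin{cases} \log \vol(K), & x \in K, \\ +\infty, & x \notin K. \end{cases}
\]
This $V$ is convex as an extended-real-valued function because $K$ is convex. Assuming Brascamp–Lieb applies, since $\nabla f(x) = v$ and $\|v\| = 1$ we immediately get
\[
\Varx_{\bx \sim {\cal D}}[v \cdot \bx] \;\leq\; \Ex_{\bx \sim {\cal D}}\bigl[\|\nabla f(\bx)\|^2\bigr] \;=\; \Ex_{\bx \sim {\cal D}}\bigl[\|v\|^2\bigr] \;=\; 1,
\]
which is exactly the conclusion.

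The only real obstacle is that Proposition 3.6 as stated requires $V$ to be a genuine real-valued convex function on all of $\R^n$, while the $V$ above takes the value $+\infty$ outside $K$. To address this, I would introduce a smooth approximation. For each $t > 0$, let $d(x,K) := \inf_{y \in K} \|x - y\|$, which is a convex function of $x$ since $K$ is convex. Define $V_t(x) := t \cdot d(x,K) + c_t$, where $c_t$ is the normalizing constant that makes $e^{-V_t(x)} \varphi_n(x)$ integrate to $1$; let ${\cal D}_t$ be the resulting probability distribution. Then $V_t$ is convex and finite-valued, so Proposition 3.6 applies and gives $\Var_{{\cal D}_t}[v \cdot \bx] \leq 1$ for every $t > 0$.

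To finish, I would pass to the limit $t \to \infty$. As $t \to \infty$, the mass that ${\cal D}_t$ places outside $K$ decays exponentially in $t \cdot d(x,K)$, and ${\cal D}_t$ converges in total variation to ${\cal D}$. Moreover, since $|v \cdot x| \leq \|x\|$ and $(v \cdot x)^2$ is dominated (uniformly in $t$) by $\|x\|^2$ with respect to the Gaussian envelope $\varphi_n(x)$, dominated convergence yields convergence of both $\E_{{\cal D}_t}[v \cdot \bx]$ and $\E_{{\cal D}_t}[(v \cdot \bx)^2]$ to the corresponding expectations under ${\cal D}$. Taking $t \to \infty$ in $\Var_{{\cal D}_t}[v \cdot \bx] \leq 1$ therefore gives $\Var_{{\cal D}}[v \cdot \bx] \leq 1$, as desired. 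The technical verification of dominated convergence is routine given the Gaussian tails, so the substantive content of the proof is simply the observation that ${\cal D}$ is a log-concave perturbation of $N(0,I_n)$, which triggers Brascamp–Lieb with a linear test function.
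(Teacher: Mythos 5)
Your proposal is correct and is essentially the paper's own argument: the paper likewise writes the density of $N(0,I_n)|_K$ as $e^{-V(x)}\varphi_n(x)$ for a convex (extended-real-valued) $V$ that is constant on $K$ and $+\infty$ off $K$, and applies Brascamp--Lieb to the linear function $f(x)=v\cdot x$. The only difference is that the paper applies the inequality directly with the $+\infty$-valued potential, whereas you add a (correct and routine) approximation-and-limit step via $V_t(x)=t\,d(x,K)+c_t$ to justify this; that extra care is welcome but not a different approach.
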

\begin{proof}
Given $c>0$, we define $V_c: \R^n \to \{c,+\infty\}$ to be
\[
V_c(x)=
\begin{cases}
c & \text{~if~}x \in K\\
+\infty &  \text{~if~}x \notin K.
\end{cases}
\]
We note that $V_c(\cdot)$ is a convex function for any choice of $c>0$, and that for a suitable choice of $c$, the density function of ${\cal D}$ is $e^{-V_c(x)} \cdot \gamma_n(x)$. Thus, we can apply the Brascamp-Lieb inequality to get that for any differentiable $f: \mathbb{R}^n \rightarrow \mathbb{R}$, 
\begin{equation}~\label{eq:Brascamp-convex-body}
\Varx_{\bx \sim \mathcal{D}} [f(\bx)] \le \Ex_{\bx \sim \mathcal{D}} [\Vert \nabla f(\bx) \Vert^2]. 
\end{equation}
Now, we may assume without loss of generality that $v=e_1.$
Taking $f(x) = x_1$ in \Cref{eq:Brascamp-convex-body}, we get that
\[
\Varx_{\bx \sim \mathcal{D}} [\bx_1] \le 1,
\]
which finishes the proof. 
\end{proof}

Now we can bound the variance of $\Vert \bx\Vert^2$ when $\bx \sim N(0,I_n)|_K$ for a symmetric convex set $K$. 

\begin{lemma}~\label{lem:bound-variance}
Let ${\cal D} = N(0,I_n)|_K$ for a symmetric convex set $K$. Then, $\Varx_{\bx \sim {\cal D}} [\Vert \bx \Vert^2] \le 4n$. 
\end{lemma}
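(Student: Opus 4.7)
The plan is to derive the variance bound directly from the Brascamp--Lieb inequality, reusing the log-concave representation of the density of $\calD$ that was already exploited in the proof of \Cref{lem:bound-variance-direction}. Namely, writing the density of $\calD$ as $e^{-V_c(x)}\varphi_n(x)$ where $V_c$ takes the value $c$ on $K$ and $+\infty$ off $K$ (for suitable $c$), the function $V_c$ is convex, so \Cref{prop:BL} applies and gives
\[
\Varx_{\bx\sim\calD}[f(\bx)] \;\le\; \Ex_{\bx\sim\calD}\!\sbra{\|\nabla f(\bx)\|^2}
\]
for any differentiable $f:\R^n\to\R$.

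The natural choice is $f(x)=\|x\|^2$, for which $\nabla f(x)=2x$ and hence $\|\nabla f(x)\|^2 = 4\|x\|^2$. Plugging this in yields
\[
\Varx_{\bx\sim\calD}\!\sbra{\|\bx\|^2} \;\le\; 4\cdot\Ex_{\bx\sim\calD}\!\sbra{\|\bx\|^2}.
\]

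So it remains to bound $\E_{\bx\sim\calD}[\|\bx\|^2]$ by $n$. This is where the hypothesis that $K$ is symmetric enters: by \Cref{eq:total-inf-estimator} we have $\E_{\bx\sim\calD}[\|\bx\|^2] = n - \sqrt{2}\cdot\TInf[K]/\vol(K)$, and \Cref{prop:poincare} guarantees that $\TInf[K]/\vol(K) \ge 0$ (indeed, it is at least $\Omega(1-\vol(K))$) for symmetric convex $K$. Therefore $\E_{\bx\sim\calD}[\|\bx\|^2]\le n$, and combining with the previous display yields $\Var_{\bx\sim\calD}[\|\bx\|^2]\le 4n$ as required.

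There is no substantive obstacle; the only thing to be careful about is that symmetry of $K$ is genuinely used in the final step to ensure $\E[\|\bx\|^2]\le n$ (for a general convex body the expected squared norm can exceed $n$, e.g.\ when $K$ is far from the origin), which is precisely why this lemma is stated for the symmetric case and is consistent with the separate treatment of general convex sets later in the paper.
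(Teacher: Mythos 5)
Your proof is correct and follows essentially the same route as the paper's: both apply the Brascamp--Lieb inequality to $f(x)=\|x\|^2$ to obtain $\Var[\|\bx\|^2]\le 4\,\E[\|\bx\|^2]$ and then use symmetry of $K$ to bound $\E[\|\bx\|^2]\le n$. The only (immaterial) difference is in that last step: the paper notes that symmetry gives $\E_{\bx\sim\calD}[\bx_i]=0$, hence $\E_{\bx\sim\calD}[\bx_i^2]=\Var[\bx_i]\le 1$ by \Cref{lem:bound-variance-direction}, whereas you invoke \Cref{eq:total-inf-estimator} together with the convex-influence \Poincare\ inequality (\Cref{prop:poincare}) to get $\TInf[K]\ge 0$ --- both are valid.
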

\begin{proof}
Taking $f (x) := \Vert x \Vert^2$ in \Cref{eq:Brascamp-convex-body}, we have that 
\[
\Varx_{\bx \sim \mathcal{D}} [\Vert \bx \Vert^2] \le 4 \cdot 
\Ex_{\bx \sim \mathcal{D}} [{\bx_1^2 + \ldots + \bx_n^2}].  
\]
Since $K$ is symmetric, for each $i \in [n]$ we have $\Ex_{\bx \sim {\cal D}}[\bx_i]=0$ and hence $\Ex_{\bx \sim {\cal D}}[\bx_i^2] = \Var[e_i \cdot \bx]$, which is at most 1 by \Cref{lem:bound-variance-direction}.
\end{proof}

\subsection{An ${O}(n/\eps^2)$-Sample Algorithm for Symmetric Convex Sets}

We recall \Cref{thm:symmetric-convex-informal}:

\begin{theorem} [Restatement of \Cref{thm:symmetric-convex-informal}]\label{thm:symmetric}
For a sufficiently large constant $C>0$, the algorithm \symmconvex~(\Cref{alg:symconvex}) has the following performance guarantee:  given any $\eps > 0$ and access to independent samples from any unknown distribution ${\cal D} \in \Psymm$, the algorithm uses $Cn/\eps^2$ samples, and

\begin{enumerate}

\item If ${\cal D}=N(0,I_n)$, then with probability at least $9/10$ the algorithm outputs ``un-truncated'';

\item If $\dtv({\cal D},N(0,I_n)) \geq \eps$, then with probability at least $9/10$ the algorithm outputs ``truncated.''

\end{enumerate}
\end{theorem}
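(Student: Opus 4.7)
The plan is to instantiate \symmconvex\ as the obvious threshold test based on the statistic $\bM = \frac{1}{T}\sum_{i=1}^T \|\bx^{(i)}\|^2$ from \Cref{eq:e1}: take $T = Cn/\eps^2$ samples and output ``un-truncated'' if $\bM \ge n - c\eps/2$ and ``truncated'' otherwise, where $c>0$ is the absolute constant from \Cref{lem:bound-on-expectation}. The entire argument then reduces to a two-sided deviation bound for $\bM$ under the two hypotheses, which I will verify via Chebyshev's inequality using the mean/variance bounds already proved.

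First I would handle the null case ${\cal D} = N(0,I_n)$. Here $\E[\|\bx\|^2] = n$ and $\Var[\|\bx\|^2] = 2n$ (standard $\chi^2_n$ facts), so $\Var[\bM] = 2n/T$. Choosing $C$ large enough so that $T = Cn/\eps^2$ forces $\Var[\bM] \le (c\eps/2)^2/20$, Chebyshev's inequality gives $\Pr[\bM < n - c\eps/2] \le \Pr[|\bM - n| \ge c\eps/2] \le 1/10$, so the algorithm outputs ``un-truncated'' with probability at least $9/10$.

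Next I would handle the alternative case ${\cal D} = N(0,I_n)|_K$ for symmetric convex $K$ with $\vol(K) \le 1-\eps$ (equivalently, $\dtv \ge \eps$, via \Cref{eq:tv-vol}). By \Cref{lem:bound-on-expectation}, $\mu := \E_{\bx \sim {\cal D}}[\|\bx\|^2] \le n - c\eps$, and by \Cref{lem:bound-variance}, $\Var_{\bx \sim {\cal D}}[\|\bx\|^2] \le 4n$, so $\Var[\bM] \le 4n/T$. Again picking $C$ large enough that $4n/T \le (c\eps/2)^2/20$, Chebyshev gives $\Pr[\bM \ge n - c\eps/2] \le \Pr[\bM - \mu \ge c\eps/2] \le 1/10$, so the algorithm correctly outputs ``truncated'' with probability at least $9/10$.

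There is no genuine obstacle here, since the structural lemmas (\Cref{lem:bound-on-expectation} and \Cref{lem:bound-variance}) already package the two nontrivial geometric inputs, namely the Poincar\'e-type inequality \Cref{prop:poincare} giving the mean gap and the Brascamp-Lieb bound giving the variance control. The only care needed is in choosing the constant $C$ so that both Chebyshev bounds fire simultaneously with the common threshold $n - c\eps/2$; this just amounts to picking $C \ge 160/c^2$ (or any similar explicit constant). Finally, $\bM$ is computable in $O(nT) = O(n^2/\eps^2)$ time, so the algorithm runs in $\poly(n, 1/\eps)$ time as claimed.
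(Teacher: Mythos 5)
Your proposal is correct and is essentially identical to the paper's own proof: same statistic $\bM$, same threshold $n - c\eps/2$, same appeal to \Cref{lem:bound-on-expectation} and \Cref{lem:bound-variance} followed by Chebyshev under each hypothesis. (Minor aside: you correctly compute $\Var_{\bx\sim N(0,I_n)}[\|\bx\|^2]=2n$, whereas the paper writes $3n$; the discrepancy is immaterial to the argument.)
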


As alluded to in \Cref{sec:techniques}, \symmconvex\ uses the estimator from \Cref{eq:e1}.

\begin{algorithm}
\caption{Distinguisher for (Mixtures of) Symmetric Convex Sets}
\label{alg:symconvex}
\vspace{0.5em}
\textbf{Input:} $\calD\in\Pconv$, $\eps > 0$\\[0.5em]
\textbf{Output:} ``Un-truncated'' or  ``truncated''

\ 

\symmconvex$(\calD, \epsilon)$:
\begin{enumerate}
    \item For $T =C \cdot n/\epsilon^2$, sample points $\bx^{(1)}, \ldots, \bx^{(T)} \sim \calD$. 
    \item Let $\bM \coloneqq {\frac 1 T} \sum_{i=1}^T \Vert\bx^{(i)} \Vert^2 $. 
    \item If $\bM\ge n-c\epsilon/2$, output ``un-truncated," else output ``truncated". 
\end{enumerate}

\end{algorithm}


\begin{proofof}{\Cref{thm:symmetric}}
Let ${\cal D}_{G} := N(0,I_n)$ and 
${\cal D}_{T} := N(0,I_n)|_K$. 
 Then, for $\bx \sim {\cal D}_{G}$, the random variable $\Vert \bx \Vert^2$ follows the $\chi^2$ distribution with $n$ degrees of freedom, and thus we have
 \begin{equation}~\label{eq:stats-for-Gaussian}
\Ex_{\bx \sim {\cal D}_{G} } [\Vert \bx \Vert^2] =n; \ \ \Varx_{\bx \sim {\cal D}_{G} } [\Vert \bx \Vert^2] = 3n. 
 \end{equation}
On the other hand, if $\dtv({\cal D},N(0,I_n)) \geq \eps$ (equivalently, $\vol(K) \leq 1-\eps$), then using \Cref{lem:bound-on-expectation} and \Cref{lem:bound-variance}, it follows that 
\begin{equation}~\label{eq:stats-for-truncated}
 \Ex_{\bx \sim {\cal D}_{T} } [\Vert \bx \Vert^2] \le n-c\epsilon; \ \ \Varx_{\bx \sim {\cal D}_{T} } [\Vert \bx \Vert^2] \le 4n. 
 \end{equation}
Since in~\Cref{alg:symconvex} the samples $\bx^{(1)}, \ldots, \bx^{(T)}$ are independent, we have the following:
\begin{align*}
	\mathbf{E}[\bM ] = n \quad\text{and}\quad \Var[\bM] = \frac{3n}{T} \qquad \text{when } \calD = \calD_G,\\
	\mathbf{E}[\bM ] = n-c\epsilon \quad\text{and}\quad \Var[\bM] \leq \frac{4n}{T} \qquad \text{when } \calD = \calD_T.
\end{align*}
 By choosing $T = Cn/\epsilon^2$ (for a sufficiently large constant $C$), it follows that when ${\cal D} = {\cal D}_{{G}}$ (resp. ${\cal D} = {\cal D}_{{T}}$), with probability at least $9/10$ we have $\bM \ge n-c\epsilon/2$ (resp. $\bM < n-c\epsilon/2$). This finishes the proof. 
\end{proofof}
\ignore{
We begin  by describing the algorithm for Theorem~\ref{thm:symmetric} followed by its analysis. 

\begin{enumerate}
\item Sample $T$ points $\bx^{(1)}, \ldots, \bx^{(T)}$ 
from $\cal D$ where $T = \tilde{O}(n)/\epsilon^2$. 
\item If there exists $1 \le i \le T$ such that $\Vert \bx^{(i)} \Vert^2 \ge n + 2 \sqrt{n \log (n/\epsilon)}$, then output ``truncated". 
\item Otherwise, compute $\langle \Vert x \Vert^2 \rangle := \frac{1}{T} \cdot (\sum_{i=1}^T \Vert \bx^{(i)} \Vert^2$. 
\end{enumerate}

To analyze this algorithm, we will need the following lemmas. 
\begin{lemma}
If ${\cal D} = N(0, I_n)$, then with probability $0.99$, for $T$ points $\bx^{(1)}, \ldots, \bx^{(T)}$  sampled from ${\cal D}$, each point $\bx^{(i)}$ satisfies $\Vert \bx^{(i)} \Vert^2 \ge n + 2 \sqrt{n \log (n/\epsilon)}$. 
\end{lemma}
\begin{proof} 
When $\bx \sim N(0, I_n)$, then $\Vert \bx \Vert^2$ follows $\chi^2(0,n)$. The claim now follows immediately from Fact ... 
\end{proof}

We next state the following important claim. 
\begin{claim}
Let $K \subseteq \mathbb{R}^n$ be a centrally symmetric convex set. Define ${\cal D} = N(0,I_n)|_K$. Then, 
\[
\Ex_{\bx \sim {\cal D}} [\Vert \bx \Vert^2] = n - \frac{I(K)}{\gamma(K)}. 
\]
\end{claim} 
\begin{proof}
Let $K(\cdot)$ be used as the indicator function of the set $K$. By definition of $I(K)$, we have 
\[
\Ex_{\bx \sim N(0,I_n)} [K(\bx) \cdot (n-\Vert \bx \Vert^2)] = I(K). 
\]
As $K(\cdot)$ is a $0/1$ valued function, the left hand side of the above equation can be rewritten as 
\[
\gamma(K) \cdot \Ex_{\bx \sim N(0,I_n)|_K} [(n-\Vert \bx \Vert^2)]  = I(K). 
\]
This finishes the proof. 
\end{proof} 

We now recall the following \Poincare-type inequality from \cite{DNS21}. 
\begin{lemma} 
{\bf \Poincare\ inequality for symmetric convex sets} 
If $K \subseteq \mathbb{R}^n$, then $I(K) \ge \Var(K)$. 
\end{lemma}
}


\subsection{An ${O}(n/\eps^2)$-Sample Algorithm for Mixtures of Symmetric Convex Sets}

By extending the above analysis, we can show that \Cref{alg:symconvex} succeeds for mixtures of (an arbitrary number of) symmetric convex sets as well. In particular, we have the following:

\begin{theorem} \label{thm:mix}
For  a sufficiently large constant $C>0$, \symmconvex~(\Cref{alg:symconvex}) has the following performance guarantee:  given any $\eps > 0$ and access to independent samples from any unknown distribution ${\cal D} \in \Mix$, the algorithm uses $Cn/\eps^{\param}$ samples, and

\begin{enumerate}

\item If ${\cal D}=N(0,I_n)$, then with probability at least $9/10$ the algorithm outputs ``un-truncated'';

\item If $\dtv({\cal D},N(0,I_n)) \geq \eps$, then with probability at least $9/10$ the algorithm outputs ``truncated.''

\end{enumerate}

\end{theorem}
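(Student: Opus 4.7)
\textbf{Proof plan for \Cref{thm:mix}.} The plan is to show that the identical estimator $\bM = \frac{1}{T}\sum_{i=1}^T \|\bx^{(i)}\|^2$ and the identical threshold $n - c\eps/2$ from \symmconvex~work when ${\cal D}$ is an arbitrary mixture of symmetric-convex-set truncations. The completeness case (${\cal D} = N(0,I_n)$) is unchanged from the proof of \Cref{thm:symmetric}, so the work is in the soundness case. Write ${\cal D} = \sum_i \lambda_i N(0,I_n)|_{K_i}$ with symmetric convex sets $K_i$ and $\lambda_i \geq 0$, $\sum_i \lambda_i = 1$ (or more generally an integral mixture; the same argument goes through).

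The first step is to bound $\E_{\bx \sim {\cal D}}[\|\bx\|^2]$ from above. By the convexity of total variation distance and \Cref{eq:tv-vol},
\[
\dtv({\cal D}, N(0,I_n)) \;\leq\; \sum_i \lambda_i \, \dtv(N(0,I_n)|_{K_i}, N(0,I_n)) \;=\; \sum_i \lambda_i (1 - \vol(K_i)),
\]
so the assumption $\dtv({\cal D}, N(0,I_n)) \geq \eps$ implies $\sum_i \lambda_i(1 - \vol(K_i)) \geq \eps$. Combining this with a component-wise application of \Cref{lem:bound-on-expectation} (whose bound $\E_{N(0,I_n)|_{K_i}}[\|\bx\|^2] \leq n - c(1 - \vol(K_i))$ holds for any symmetric convex $K_i$, not just those with volume bounded away from $1$) yields
\[
\mu \;:=\; \E_{\bx \sim {\cal D}}[\|\bx\|^2] \;\leq\; n - c \sum_i \lambda_i (1 - \vol(K_i)) \;\leq\; n - c\eps.
\]

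The main obstacle is the variance bound, since a symmetric convex component $K_i$ may have $\mu_i := \E_{N(0,I_n)|_{K_i}}[\|\bx\|^2]$ as small as $0$ (e.g.~a tiny origin-centered ball), so the naive ``between-component'' variance can be as large as $\Omega(n^2)$. The fix is to pay attention to how this variance scales with the mean gap. Using the law of total variance with $\by$ the random mixture index,
\[
\Varx_{\cal D}[\|\bx\|^2] \;=\; \Ex_{\by}\!\sbra{\Varx[\|\bx\|^2 \mid \by]} + \Varx_{\by}[\mu_{\by}] \;\leq\; 4n + \Varx_{\by}[\mu_{\by}],
\]
where the first term is at most $4n$ by applying \Cref{lem:bound-variance} within each component. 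For the second term, since $0 \leq n - \mu_i \leq n$,
\[
\Varx_{\by}[\mu_{\by}] \;\leq\; \sum_i \lambda_i (n - \mu_i)^2 \;\leq\; n \sum_i \lambda_i (n - \mu_i) \;=\; n \Delta,
\]
where $\Delta := n - \mu$. Hence $\Varx_{\cal D}[\|\bx\|^2] \leq 4n + n\Delta$.

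Finally, applying Chebyshev's inequality with $T = Cn/\eps^2$: in the alternative case, $\bM > n - c\eps/2$ requires $\bM - \mu > \Delta - c\eps/2 \geq \Delta/2$ (since $\Delta \geq c\eps$), so
\[
\Pr\!\sbra{\bM > n - c\eps/2} \;\leq\; \frac{4\,\Varx_{\cal D}[\|\bx\|^2]}{T\Delta^2} \;\leq\; \frac{4(4n + n\Delta)}{T\Delta^2} \;=\; \frac{16n}{T\Delta^2} + \frac{4n}{T\Delta}.
\]
Both summands are maximized at $\Delta = c\eps$ (the smallest admissible value), giving a bound of $O(1/C) \leq 1/10$ for $C$ sufficiently large. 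The key conceptual point is that the potentially large between-component variance is always tempered by a factor of $\Delta$, which appears squared in the Chebyshev denominator; this cancellation is exactly why the $O(n/\eps^2)$ sample complexity extends from a single symmetric convex body to an arbitrary mixture.
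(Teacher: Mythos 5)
Your proof is correct and follows essentially the same route as the paper's: a component-wise application of \Cref{lem:bound-on-expectation} (in its stronger form $n-c(1-\vol(K))$) combined with convexity of total variation distance for the mean gap, the decomposition $\Var_{\calD}[\|\bx\|^2]\le 4n+\Var_{\bK}[\ba_{\bK}]$ (the paper's \Cref{lem:calc-for-mix}, which you re-derive via the law of total variance), the bound $\Var_{\bK}[\ba_{\bK}]\le n\,\E_{\bK}[n-\ba_{\bK}]$ using $0\le \ba_{\bK}\le n$, and Chebyshev. The observation that the between-component variance is tempered by a factor of the mean gap $\Delta$, which then appears squared in the Chebyshev denominator, is exactly the paper's argument.
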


The following lemma, which characterizes the mean and variance of a distribution in $\Mix$ in terms of the components of the mixture, will crucial to the proof of \Cref{thm:mix}:

\begin{lemma}~\label{lem:calc-for-mix}
Let $\mathcal{X}$ denote a distribution over Gaussians truncated by symmetric convex sets. Suppose ${\cal D_{\mathcal{X}}} \in \Mix$ is the mixture of $N(0,I_n)|_\bK$ for $\bK \sim \mathcal{X}$. 
Let $\ba_{\bK}$ denote the random variable 
\[\ba_{\bK} = \Ex_{\bx \sim  N(0,I_n)|_{\bK}} \sbra{\Vert \bx \Vert^2} \qquad \text{where }\bK \sim \mathcal{X}.\] 
Then 
\begin{equation} \label{eq:expectation-ak}
	\Ex_{\bx \sim {\cal D}_{\mathcal{X}}} \sbra{\Vert \bx \Vert^2} = \Ex_{\bK \sim \mathcal{X}} \sbra{\ba_{\bK}},
\end{equation}
\begin{equation} \label{eq:variance-ak}
	\Varx_{\bx \sim {\cal D}_{\mathcal{X}}} \sbra{\Vert \bx \Vert^2} \leq 4n + \Varx_{\bK \sim \mathcal{X}} \sbra{\ba_{\bK}}.
\end{equation}
\end{lemma}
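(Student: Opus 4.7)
The plan is to prove both equations by viewing ${\cal D}_\mathcal{X}$ as a two-stage sampling procedure: first draw $\bK \sim \mathcal{X}$, then draw $\bx \sim N(0,I_n)|_{\bK}$. With this interpretation, \Cref{eq:expectation-ak} is nothing more than the tower property (law of total expectation), since conditional on $\bK$ the inner expectation of $\|\bx\|^2$ equals $\ba_{\bK}$ by definition, and averaging over $\bK \sim \mathcal{X}$ gives the claim.

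For \Cref{eq:variance-ak}, I would apply the law of total variance to the random variable $\|\bx\|^2$ with respect to the conditioning variable $\bK$:
\[
\Varx_{\bx \sim {\cal D}_{\mathcal{X}}}\sbra{\|\bx\|^2} = \Ex_{\bK \sim \mathcal{X}}\sbra{\Varx_{\bx \sim N(0,I_n)|_{\bK}}\sbra{\|\bx\|^2}} + \Varx_{\bK \sim \mathcal{X}}\sbra{\Ex_{\bx \sim N(0,I_n)|_{\bK}}\sbra{\|\bx\|^2}}.
\]
The second summand is by definition $\Var_{\bK \sim \mathcal{X}}[\ba_{\bK}]$. For the first summand, I invoke \Cref{lem:bound-variance}: since every $\bK$ in the support of $\mathcal{X}$ is a symmetric convex set, the inner (conditional) variance $\Var_{\bx \sim N(0,I_n)|_{\bK}}[\|\bx\|^2]$ is bounded by $4n$ pointwise in $\bK$, and so the outer expectation over $\bK \sim \mathcal{X}$ is also at most $4n$. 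Combining these two bounds yields \Cref{eq:variance-ak}.

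There is no real obstacle here beyond being careful about the two-stage sampling interpretation: the result is essentially a direct combination of the tower property, the law of total variance, and the per-component variance bound already established in \Cref{lem:bound-variance}. The only minor subtlety is making sure that in the mixture formulation of $\Mix$ (which may be a continuous mixture over convex sets), the tower property and law of total variance apply; this follows from standard measure-theoretic considerations since all relevant random variables are nonnegative and the mixture density integrates to $1$.
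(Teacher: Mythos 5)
Your proof is correct and is essentially the same argument as the paper's: the paper derives the bound by expanding $\Ex_{\bx\sim N(0,I_n)|_K}[\|\bx\|^4] \le \ba_K^2 + 4n$ via \Cref{lem:bound-variance} and then averaging over $\bK$, which is exactly the law of total variance unpacked by hand. Your direct invocation of the tower property and the total-variance decomposition is a clean, equivalent presentation.
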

\begin{proof}
Note that \Cref{eq:expectation-ak} follows from linearity of expectation and the definition of $\ba_{\bK}$. For \Cref{eq:variance-ak}, note that for any symmetric convex set $K$, by definition of variance we have 
\begin{align*}
	\Ex_{\bx \sim N(0,I_n)|_K} \sbra{\Vert \bx \Vert^4} &= \left(\Ex_{\bx \sim N(0, I_n)|_K} \sbra{\Vert \bx \Vert^2} \right)^2 + \Varx_{\bx \sim N(0, I_n)|_K} \sbra{\Vert \bx \Vert^2} \nonumber \\ 
&\le \boldsymbol{a}_K^2 + 4n,
\end{align*}
where the inequality is by~\Cref{lem:bound-variance}. By linearity of expectation, it now follows that
\[
\Ex_{\bx \sim {\cal D_{\mathcal{X}}}} \sbra{\Vert \bx \Vert^4} \le 4n + \Ex_{\bK \sim \mathcal{X}} \sbra{\ba_{\bK}^2}. 
\] 
Combining with \Cref{eq:expectation-ak}, we get \Cref{eq:variance-ak}. 
\end{proof}
We are now ready to prove ~\Cref{thm:mix}.
\begin{proofof}{\Cref{thm:mix}} 
Let $\mathcal{X}$ denote a distribution over symmetric convex sets. Define ${\cal D_{\mathcal{X}}} \in \Mix$ to be the mixture of $N(0,I_n)|_{\bK}$ for $\bK \sim \mathcal{X}$ and define ${\cal D}_{{G}} := N(0,I_n)$. Using the fact that the samples $\bx^{(1)}, \ldots, \bx^{(T)}$ are independent, as in the proof of~\Cref{thm:symmetric}, 
we have that
\begin{equation} \label{eq:Gaussian-statistics}
	\E[\bM] = n, \quad \Var[\bM] = \frac{3n}{T} \qquad \text{when }\calD = \calD_G.
\end{equation}
As $T = Cn/\epsilon^{\param}$ (for a sufficiently large constant $C$), it follows that when ${\cal D} = {\cal D}_{{G}}$, with probability at least $9/10$ we have that $\bM \ge n-\epsilon/2$. 
 
 Now we analyze the case that ${\cal D} = {\cal D_{\mathcal{X}}}$ has $\dtv({\cal D},N(0,I_n)) \geq \eps$.
 From~\Cref{lem:calc-for-mix}, it follows that in this case
 \begin{equation}
 	\mathbf{E} \sbra{\bM} = \Ex_{\bK \sim \mathcal{X}} \sbra{\ba_{\bK}},  \label{eq:truncated-mix-statistics1}
 \end{equation}
 \begin{equation}
 	\Varx[\bM] = \frac{
\Varx_{\bx \sim {\cal D}_{\mathcal{X}}} \sbra{\Vert \bx \Vert^2} }{T} \leq
 \frac{4n}{T} + \frac{\Varx_{\bK \sim \mathcal{X}} \sbra{\ba_{\bK}}}{T}.
   \label{eq:truncated-mix-statistics}
 \end{equation}
  Next, observe that 
 \begin{equation}~\label{equation:n-ak}
\Ex_{\bK \sim \mathcal{X}} \left[(n-\ba_{\bK})\right] \ge 
c \cdot \Ex_{\bK \sim \mathcal{X}} \left[1-\vol(\bK)\right] \ge 
c \cdot \dtv({\cal D}, N(0,I_n)) \ge c\epsilon,
 \end{equation}
 where the first inequality uses~\Cref{lem:bound-on-expectation} and the second inequality follows from the definition of TV distance.  
 Now, observing that variance of a random variable is invariant under negation and  translation and that  $T = Cn/\epsilon^\param$, it follows 
from \Cref{eq:truncated-mix-statistics} that 
\[
 \Var[\bM] \le \frac{4n}{T} + \frac{\Varx_{\bK \sim \mathcal{X}} \sbra{\ba_{\bK}}}{T}  \le \frac{4 \epsilon^\param}{C} + \frac{\epsilon^\param \cdot \Varx_{\bK \sim \mathcal{X}} \sbra{n-\ba_{\bK}}}{Cn}  \le  \frac{4 \epsilon^\param}{C} + \frac{\epsilon^\param \cdot \Ex_{\bK \sim \mathcal{X}} \sbra{(n-\ba_{\bK})^2}}{Cn} .
 \]
By \Cref{eq:total-inf-estimator} and \Cref{prop:poincare}, we have that $0 \le a_K \le n$  for any symmetric convex $K$.  Thus, we can further upper bound the right hand side to obtain 
\[
 \Var[\bM] \le \frac{4\epsilon^\param}{C} +\frac{ \epsilon^\param \cdot \Ex_{\bK \sim \mathcal{X}} [n-\ba_{\bK}]}{C}. 
 \]
Recalling from \Cref{equation:n-ak} that $\Ex_{\bK \sim \mathcal{X}} [n-\ba_{\bK}] \ge c\epsilon$, a routine computation shows that for a sufficiently large constant $C$, we have
 \[
 \Var[\bM] \le  \frac{4\epsilon^\param}{C} +\frac{ \epsilon^\param \cdot \Ex_{\bK \sim \mathcal{X}} \sbra{n-\ba_{\bK}}}{C} \le \frac{ \Ex_{\bK \sim \mathcal{X}} \sbra{n-c\epsilon/2-\ba_{\bK}}^2}{100}. 
 \]
\Cref{eq:truncated-mix-statistics1} and Chebyshev's inequality now give that when ${\cal D} = {\cal D}_{\mathcal{X}}$, 
with probability at least $9/10$ we have
$
\bM \le n-c\epsilon/2, 
$ completing the proof.
\end{proofof}


\section{An $O(n/\eps^2)$-Sample Algorithm for General Convex Sets}
\label{subsec:general-convex}

In this section we present a $O(n/\eps^2)$-sample algorithm for distinguishing the standard normal distribution from the standard normal distribution restricted to an arbitrary convex set. More precisely, we prove the following:

\begin{theorem} \label{thm:convex}
There is an algorithm, \Convex\ (\Cref{alg:convex}), with the following performance guarantee:  Given any $\eps > 0$ and access to independent samples from any unknown distribution ${\cal D} \in \Pconv$, the algorithm uses ${O}(n/\eps^2)$ samples, runs in $O(n^{2}/\eps^2)$ time, and

\begin{enumerate}

\item If ${\cal D}=N(0,I_n)$, then with probability at least $9/10$ the algorithm outputs ``un-truncated;''

\item If $\dtv({\cal D},N(0,I_n)) \geq \eps$, then with probability at least $9/10$ the algorithm outputs ``truncated.''

\end{enumerate}

\end{theorem}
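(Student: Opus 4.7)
The plan is to have $\Convex$ compute, from a single sample of $T = \Theta(n/\eps^2)$ points $\bx^{(1)},\ldots,\bx^{(T)} \sim \calD$, two statistics: the average squared norm $\bM = \tfrac{1}{T}\sum_i \|\bx^{(i)}\|^2$ already used by \symmconvex, and an estimate $\hat{\bmu}$ of $\mu_\calD := \E_{\bx\sim\calD}[\bx]$ obtained by running \Hopkins\ (\Cref{prop:mean-estimator}) on the same sample. The algorithm will output ``truncated'' iff $\bM < n - c_1\eps$ or $\|\hat{\bmu}\| > c_2\eps$, for appropriate absolute constants $c_1, c_2 > 0$. The null case $\calD = N(0,I_n)$ is routine: standard $\chi^2$ concentration and \Cref{prop:mean-estimator} place $\bM$ within $O(\eps)$ of $n$ and $\|\hat\bmu\|$ within $O(\eps)$ of $0$ with high probability, so neither test fires.

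For the truncated case, I will first observe that the Brascamp-Lieb bound \Cref{lem:bound-variance-direction} gives $\tr(\Sigma_\calD) \le n$, so by \Cref{prop:mean-estimator}, $\|\hat\bmu - \mu_\calD\| = O(\eps)$ w.h.p.; hence the $\hat\bmu$ test correctly fires whenever $\|\mu_\calD\|$ exceeds a large enough multiple of $\eps$. The main work is then to show that at least one of the two tests always fires. Following the sketch in \Cref{sec:techniques}, I will case-split on whether $K$ contains a ``reasonably large'' origin-centered ball $\Ball(r_0)$, with $r_0$ a small absolute constant to be calibrated. In \textbf{Case A} ($\Ball(r_0) \subseteq K$) one has $\rinn(K) \ge r_0$, so \Cref{prop:kkl} gives $\TInf[K]/\vol(K) \geq \Omega(r_0 \eps) = \Omega(\eps)$, and \Cref{eq:total-inf-estimator} yields $\E_\calD[\|\bx\|^2] \le n - \Omega(\eps)$. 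Combined with a variance bound $\Var_\calD[\|\bx\|^2] = O(n)$---which follows from Brascamp-Lieb in the spirit of \Cref{lem:bound-variance}, provided $\|\mu_\calD\|^2$ is bounded (which the containment $\Ball(r_0) \subseteq K$ enforces by anchoring $K$ near the origin)---Chebyshev's inequality makes the $\bM$ test fire with $T = O(n/\eps^2)$.

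In \textbf{Case B} ($\Ball(r_0) \not\subseteq K$), I will pick $p \in \Ball(r_0) \setminus K$ and apply the separating-hyperplane theorem to produce a unit vector $v$ with $K \subseteq \{y : v\cdot y \geq v \cdot p\}$ and $v\cdot p \in [-r_0, r_0]$. The one-dimensional marginal of $v\cdot \bx$ under $\calD$ has a log-concave density (by Pr\'ekopa-Leindler) supported on $[v\cdot p,\infty)$, and applying Vempala's sharpening of Brascamp-Lieb (\Cref{prop:vempala}) will force the mean of this marginal to exceed an absolute constant once $r_0$ is chosen small enough; this forces $\|\mu_\calD\| = \Omega(1) \gg \eps$, so the $\hat\bmu$ test fires.

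The hardest step will be Case B. \Cref{prop:vempala} is phrased as an upper bound on a centered second moment rather than as a direct lower bound on the mean, so extracting the desired $\Omega(1)$ mean lower bound will require the right choice of centering parameter $\theta$ in the statement of \Cref{prop:vempala}, together with a careful argument ruling out the possibility that the second-moment savings predicted by the inequality are absorbed entirely by a variance reduction instead of a mean displacement. The other delicate point is the calibration of $r_0$: Case A wants $r_0$ large enough that the Poincar\'e gap $\Omega(r_0 \eps)$ stays $\Omega(\eps)$, while Case B wants $r_0$ small enough for Vempala to force a constant-sized mean; both cases go through simultaneously when $r_0$ is a sufficiently small absolute constant.
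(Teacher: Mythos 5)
Your overall architecture (the two statistics $\bM$ and a mean estimate via \Cref{prop:mean-estimator}, a case split on whether $K$ contains a small origin-centered ball, \Poincare\ for convex influences in the large-inradius case, Vempala's inequality in the small-inradius case) matches the paper's, and your null case and Case A are fine. But Case B contains a genuine gap that the rest of the plan cannot absorb.

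The claim that $\Ball(r_0)\not\subseteq K$ forces $\|\mu_{\calD}\|=\Omega(1)$ is false. Take $K=\{x:|x_1|\le\delta\}$ with $\delta$ tiny, or $K=\Ball(\delta)$: these are convex sets with inradius $\delta<r_0$, with $\vol(K)$ as small as you like (so $\dtv(\calD,N(0,I_n))\ge\eps$), and with $\mu_{\calD}=0$ exactly. Your separating-hyperplane setup applies verbatim ($K$ lies in a halfspace $\{y: v\cdot y\ge v\cdot p\}$ with $|v\cdot p|\le r_0$, and the marginal of $v\cdot\bx$ is log-concave and supported on a halfline whose endpoint is within $r_0$ of the origin), yet the marginal's mean is $0$. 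So the "careful argument ruling out the possibility that the second-moment savings are absorbed entirely by a variance reduction" cannot exist: that possibility is realized by these examples. For such $K$ your mean test never fires, and your proposal gives no argument that the $\bM$ test fires.

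The fix---and the technical heart of the paper's proof---is a further sub-case split \emph{within} the small-inradius case according to whether $\|\mu_{\calD}\|$ exceeds a small absolute constant. If it does, the mean estimate detects truncation (an absolute-constant threshold suffices here; your $c_2\eps$ threshold is tighter than needed but not wrong). If it does not, one must show $\E[\bM]\le n-\Omega(1)$ directly, which requires two ingredients your plan omits: (i) a \emph{shifted} version of Vempala's inequality (\Cref{prop:vempala-shifted}) giving $\Ex_{\bx\sim\calD}[\bx_1^2]\le 1+\mu_1^2-\tfrac{1}{2\pi}e^{-(0.1-\mu_1)^2}$ in the constrained direction, and (ii) control of all \emph{other} coordinates, since the second moment in unconstrained directions can exceed $1$ when the mean is displaced. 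The paper handles (ii) by rotating so that $\mu$ lies in the span of the first two coordinates, invoking \Cref{cor:inf-nonneg-com} to get $\Ex[\bx_i^2]\le 1$ for $i\ge 3$, and bounding $\Ex[\bx_2^2]\le 1+\mu_2^2$; only then is the total excess $\|\mu\|^2$ beaten by the deficit $\tfrac{1}{2\pi}e^{-(0.1-\mu_1)^2}$, using $\|\mu\|^2<0.06$. This missing sub-case is where essentially all of the difficulty of \Cref{thm:convex} lives.
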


\begin{algorithm}
\caption{Distinguisher for General Convex Sets}
\label{alg:convex}
\vspace{0.5em}
\textbf{Input:} $\calD\in\Pconv$, $\eps > 0$\\[0.5em]
\textbf{Output:} ``un-truncated'' or  ``truncated''

\ 

\Convex$(\calD, \epsilon)$:
\begin{enumerate}
	
    \item {If $\GMT(\calD, 0.01)$ returns ``reject,'' then halt and output\newline ``truncated.''}

    \item For $T =C \cdot n/\epsilon^2$, sample points $\x{1}, \ldots, \x{T} \sim \calD$. 
    
    \item Set $\bM := \frac{1}{T}\sum_{j=1}^T \|\x{j}\|^2$. 
    Output ``truncated'' if $\bM \leq n - c\eps/2$ and ``un-truncated'' \newline otherwise. 
    
\end{enumerate}

\end{algorithm}

Note that the estimator $\bM$ in \Cref{alg:convex} is identical to the estimator $\bM$ in \Cref{alg:symconvex} to distinguish Gaussians restricted to (mixtures of) symmetric convex sets. As we will see, the analysis of \Cref{alg:symconvex} via the \Poincare~inequality for convex influences (cf. \Cref{prop:poincare}) extends to arbitrary convex sets with ``large inradius.'' For the ``small inradius'' case, we further consider sub-cases depending on how close the center of mass of $\calD$, denoted $\mu$, is to the origin (see \Cref{fig:convex-ub}):
\begin{itemize}
	\item \textbf{Case 1:} When $\|\mu\| \gg 0$, we detect truncation via \GMT~in Step 2 of~\Cref{alg:convex}.
	\item \textbf{Case 2:} When $\|\mu\| \approx 0$, we show that we can detect truncation via $\bM$. This is our most technically-involved case and relies crucially on (small extensions of) Vempala's quantitative Brascamp-Lieb inequality (\Cref{prop:vempala}).
\end{itemize}

\subsection{Useful Preliminaries}
\label{subsec:prelims-for-general-convex}

Below are two useful consequences of Vempala's quantitative one-dimensional Brascamp-Lieb inequality (\Cref{prop:vempala}) which will be useful in our analysis of \Cref{alg:convex}. 

The following proposition says that if the center of mass of a convex body (with respect to the standard normal distribution) along a direction $v\in \S^{n-1}$ is the origin, then the convex influence of $v$ on the body  is non-negative.

\begin{proposition}
\label{cor:inf-nonneg-com}
	Given a convex set $K\sse\R^n$ and $v\in \S^{n-1}$, if 
	\[\Ex_{\bx\sim N(0,I_n)}\sbra{K(\bx)\abra{v,\bx}} = 0,\]
	then $\Inf_v[K] \geq 0$.
\end{proposition}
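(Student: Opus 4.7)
The plan is to prove this by reducing to the one-dimensional quantitative Brascamp–Lieb inequality of Vempala (\Cref{prop:vempala}) via a marginalization trick. Unpacking \Cref{def:influence}, what we need to show is exactly
\[
\Ex_{\bx \sim N(0,I_n)}\sbra{K(\bx)\abra{v,\bx}^2} \;\leq\; \Ex_{\bx \sim N(0,I_n)}\sbra{K(\bx)},
\]
under the hypothesis that $\E[K(\bx)\abra{v,\bx}]=0$. By the rotational invariance of $N(0,I_n)$, I may assume without loss of generality that $v=e_1$, so the goal reduces to showing $\E[K(\bx)\bx_1^2]\leq \E[K(\bx)]$ given $\E[K(\bx)\bx_1]=0$.

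Next I would marginalize out the coordinates $\bx_2,\ldots,\bx_n$. Concretely, define
\[
f(t) \;:=\; \Ex_{\by \sim N(0,I_{n-1})}\sbra{K(t,\by)},
\]
which is a nonnegative function on $\R$. The key structural property is that $f$ is log-concave: indeed, $K(t,y)\cdot\varphi_{n-1}(y)$ is a log-concave function on $\R^n$ (since $K$ is convex and the Gaussian density is log-concave), and by the Prékopa–Leindler inequality marginals of log-concave functions are log-concave. Now rewrite the two quantities of interest by conditioning on $\bx_1=t$: one gets $\E[K(\bx)\bx_1]=\E_{t\sim N(0,1)}[t\cdot f(t)]$ and $\E[K(\bx)\bx_1^2]=\E_{t\sim N(0,1)}[t^2 f(t)]$, while $\E[K(\bx)]=\E_{t\sim N(0,1)}[f(t)]$.

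The hypothesis thus becomes $\E_{t\sim N(0,1)}[t\cdot f(t)]=0$, and the conclusion we want is $\E_{t\sim N(0,1)}[t^2 f(t)]\leq \E_{t\sim N(0,1)}[f(t)]$. This is precisely the content of \Cref{prop:vempala} applied with $\theta=0$ and with the log-concave function $f$. Plugging in finishes the proof.

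The main thing to check carefully is the log-concavity of $f$ (the Prékopa–Leindler step): a clean way is to note that $(t,y)\mapsto K(t,y)\varphi_{n-1}(y)$ is log-concave on $\R\times\R^{n-1}$ because its support is the convex set $K$ and on that support it equals the log-concave function $\varphi_{n-1}(y)$; then the $y$-marginal is log-concave in $t$ by Prékopa. Beyond this, there are no real obstacles — the conditional reformulation and the invocation of \Cref{prop:vempala} are essentially formal.
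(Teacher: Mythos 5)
Your proposal is correct and matches the paper's own proof essentially line for line: reduce to $v=e_1$ by rotational invariance, form the one-dimensional marginal $f(t)=\E_{\by\sim N(0,I_{n-1})}[K(t,\by)]$, justify its log-concavity via Pr\'ekopa--Leindler, and apply \Cref{prop:vempala} with $\theta=0$. No issues.
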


\begin{proof}
	We may assume without loss of generality that $v = e_1$. Note that the function $f :\R\to\R_{\geq 0}$ defined by
	\[f(x) := \Ex_{\by\sim N(0,I_{n-1})}\sbra{K(x, \by)},\]
	is a log-concave function (this is immediate from the Pr\'ekopa-Leindler inequality~\cite{Prekopa1973,Leindler1972}).
	Furthermore, note that by \Cref{fact:avg-norm-influence},
	\[\sqrt{2}\cdot\Inf_{v}[K] = \Ex_{\bx\sim N(0,1)}\sbra{f(\bx)(1-\bx^2)},\]
	and so the result follows by \Cref{prop:vempala}.
\end{proof}

We also require a version of \Cref{prop:vempala} for log-concave functions whose center of mass with respect to the standard normal distribution is not at the origin. Looking ahead, \Cref{prop:vempala-shifted} will come in handy when analyzing \Cref{alg:convex} for Gaussians restricted to convex sets with small inradius and with center of mass close to the origin. 

\begin{proposition}
\label{prop:vempala-shifted}	
	Let $f:\R\to\R_{\geq 0}$ be a one-dimensional log-concave function with 
	\[\Ex_{\bx\sim N(0,1)}\sbra{\bx f(\bx)} = \Ex_{\bx\sim N(0,1)}\sbra{\mu\cdot f(\bx)}\]
	for some $\mu \in \R$.
	Then 
	\[\Ex_{\bx\sim N(0,1)}\sbra{\bx^2 f(\bx)} \leq \pbra{1 + \mu^2}\cdot\Ex_{\bx\sim N(0,1)}\sbra{f(\bx)}.\]
	Furthermore, if $\supp(f)\sse(-\infty,\epsilon]$, then 
	\begin{equation} \label{eq:potato}
		\Ex_{\bx\sim N(0,1)}\sbra{\bx^2 f(\bx)} \leq \pbra{1 + \mu^2 - \frac{1}{2\pi}e^{-(\eps-\mu)^2}}\cdot\Ex_{\bx\sim N(0,1)}\sbra{f(\bx)}.
	\end{equation}
	
\end{proposition}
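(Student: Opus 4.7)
The plan is to reduce to \Cref{prop:vempala} by a one-dimensional translation that absorbs the shift $\mu$ into the Gaussian mean. Define $g:\R\to\R_{\geq 0}$ by $g(y) := f(y + \mu)$. Since translation preserves log-concavity, $g$ is log-concave. The key observation is that for any polynomial $p(x)$, the change of variables $y = x - \mu$ gives $\varphi(x)\,dx = \varphi(y+\mu)\,dy$, and $\varphi(y+\mu)$ is precisely the density of $N(-\mu,1)$ at $y$. Hence
\[
\Ex_{\bx\sim N(0,1)}\sbra{p(\bx) f(\bx)} \;=\; \Ex_{\by\sim N(-\mu,1)}\sbra{p(\by + \mu) g(\by)}.
\]

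Applying this identity with $p(x) = 1$ and $p(x) = x$, the hypothesis $\Ex_{\bx\sim N(0,1)}[\bx f(\bx)] = \mu\, \Ex_{\bx\sim N(0,1)}[f(\bx)]$ becomes
\[
\Ex_{\by\sim N(-\mu,1)}\sbra{(\by + \mu)\, g(\by)} \;=\; \mu\,\Ex_{\by\sim N(-\mu,1)}\sbra{g(\by)},
\]
which after cancellation is exactly $\Ex_{\by\sim N(-\mu,1)}[\by\, g(\by)] = 0$. This is the centered hypothesis required by \Cref{prop:vempala} with $\theta = -\mu$. Invoking that proposition therefore gives
\[
\Ex_{\by\sim N(-\mu,1)}\sbra{\by^2 g(\by)} \;\leq\; \Ex_{\by\sim N(-\mu,1)}\sbra{g(\by)}.
\]

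Now I translate back using $p(x) = x^2$. Expanding $(\by+\mu)^2 = \by^2 + 2\mu\by + \mu^2$, the middle term vanishes by the centered hypothesis, so
\[
\Ex_{\bx\sim N(0,1)}\sbra{\bx^2 f(\bx)} \;=\; \Ex_{\by\sim N(-\mu,1)}\sbra{\by^2 g(\by)} + \mu^2\, \Ex_{\by\sim N(-\mu,1)}\sbra{g(\by)} \;\leq\; (1+\mu^2)\,\Ex_{\bx\sim N(0,1)}\sbra{f(\bx)},
\]
proving the first inequality. For the ``furthermore'' part, note that $\supp(f)\sse(-\infty,\eps]$ implies $\supp(g)\sse(-\infty,\eps-\mu]$, so applying the quantitative clause of \Cref{prop:vempala} (with the support upper bound $\eps-\mu$ and Gaussian mean $\theta = -\mu$) improves the bound on $\Ex_{\by\sim N(-\mu,1)}[\by^2 g(\by)]$ by the subtractive term $\tfrac{1}{2\pi}e^{-(\eps-\mu)^2}\,\Ex_{\by\sim N(-\mu,1)}[g(\by)]$, which carries through the same expansion to yield \Cref{eq:potato}.

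The argument is essentially mechanical once the shift is set up correctly; I do not anticipate a substantive obstacle. The only point to be careful about is that the quantitative bound from \Cref{prop:vempala} is stated in terms of the squared distance from the support's upper endpoint, which under the translation becomes $(\eps-\mu)^2$ rather than $\eps^2$—this is precisely the exponent appearing in the statement, confirming that the shift is accounted for consistently.
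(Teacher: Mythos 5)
Your proposal is correct and is essentially identical to the paper's own proof: both translate $f$ to $\wtf(x)=f(x+\mu)$, verify that the centering hypothesis of \Cref{prop:vempala} holds with respect to $N(-\mu,1)$, apply that proposition (with support endpoint $\eps-\mu$ for the quantitative clause), and translate back, with the cross term vanishing by the centering hypothesis. No gaps.
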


We prove \Cref{prop:vempala-shifted} by translating the log-concave function $f$ so that its center of mass (with respect to a shifted Gaussian) is the origin, and then appealing to \Cref{prop:vempala}.

\begin{proof}
	Note that it suffices to prove \Cref{eq:potato}. Consider the one-dimensional log-concave function $\wtf:\R\to\R_{\geq  0}$ given by 
	\[\wtf(x) := f(x+\mu).\]
	It is clear that $\supp(\wtf) \sse (-\infty, \eps-\mu]$ if $\supp(f) \sse (-\infty, \eps]$. Note that 
	\begin{equation} \label{eq:potato-means}
		\Ex_{\bx\sim N(-\mu,1)}\sbra{\wtf(\bx)} = \int_{\R} f(x + \mu)\varphi(x+\mu)\,dx = \Ex_{\bx\sim N(0,1)}\sbra{f(\bx)}.
	\end{equation}
	We also have that 
	\begin{align*}
		\Ex_{\bx\sim N(-\mu,1)}\sbra{\bx\wtf(\bx)} & = \int_{\R} x f(x+\mu)\phi(x+\mu)\, dx\\
		&= \int_{\R} (y-\mu) f(y)\phi(y)\, dy\\
		&= \Ex_{\by\sim N(0,1)}\sbra{\by f(\by)} - {\Ex_{\by\sim N(0,1)}\sbra{\mu\cdot f(\by)}}\\
		&= 0,
	\end{align*}
	where we made the substitution $y = x-\mu$. Therefore, by \Cref{prop:vempala}, we have that 
	\begin{equation}
		\label{eq:potato-1}
		\Ex_{\bx\sim N(-\mu,1)}\sbra{\bx^2\wtf(\bx)} \leq \pbra{1 - \frac{1}{2\pi}e^{-(\eps-\mu)^2}}\cdot\Ex_{\bx\sim N(-\mu,1)}\sbra{\wtf(\bx)}. 
	\end{equation}
	However, we have 
	\begin{align}
		\Ex_{\bx\sim N(-\mu,1)}\sbra{\bx^2\wtf(\bx)} &= \int_\R x^2 f(x+\mu)\phi(x+\mu)\, dx \nonumber \\
		&= \int_\R (y-\mu)^2 f(y)\phi(y)\,dy \nonumber \\
		&= \Ex_{\by\sim N(0,1)}\sbra{\by^2f(\by)} - \Ex_{\by\sim N(0,1)}\sbra{\mu^2\cdot f(\by)}. \label{eq:potato-2}
	\end{align}
	\Cref{eq:potato} now follows from \Cref{eq:potato-means,eq:potato-1,eq:potato-2}.
\end{proof}

\subsection{Proof of \Cref{thm:convex}}
\label{subsec:general-convex-pf}

We can now turn to the proof of \Cref{thm:convex}.

\begin{proof}[Proof of \Cref{thm:convex}]
	Suppose first that $\calD = N(0,I_n)$. In this case, 
	\begin{equation} 
	\label{eq:gaussian-M-mean}
		\E\sbra{\bM} = \frac{1}{T}\sum_{j=1}^T \E\sbra{\|\x{j}\|^2} = \frac{1}{T}\sum_{j=1}^T n = n.
	\end{equation}
	We also have that 
	\begin{align}
		\Var\sbra{\bM} &= \frac{1}{T^2}\sum_{j=1}^T\Var\sbra{\|\x{j}\|^2} 
		= \frac{1}{T}\pbra{\Varx_{\bx\sim N(0,I_n)}\sbra{\|\bx\|^2}} 
		= \frac{1}{T}\sum_{i=1}^n\Varx_{\bx_i\sim N(0,1)}\sbra{\bx_i^2} 
		= \frac{2n}{T}, 	\label{eq:gaussian-M-variance}	
	\end{align}
	where we used the fact that $\Varx_{\bx\sim N(0,1)}[\bx^2] = 2.$ Looking ahead, we also note that in this case, by \Cref{prop:DKP} we have the algorithm outputs ``truncated'' in Step 2 with probability at most $0.01$. 
	
	Next, suppose that $\calD = N(0,I_n)|_K$ for convex $K\sse\R^n$ with $\dtv(\calD,N(0,I_n))\geq\eps$. Let us write $\rinn$ for the in-radius of $K$. Suppose first that $\rinn \geq 0.1$. In this case, we have that 
	\begin{equation} 
	\label{eq:big-inrad-M-mean}
		\E\sbra{\bM} = \Ex_{\bx\sim\calD}\sbra{\|\bx\|^2} \leq n -\Omega\pbra{\eps}.
	\end{equation}
	by \Cref{eq:tv-vol}, \Cref{fact:avg-norm-influence}, and \Cref{prop:kkl}. By independence of the $\x{j}$'s, we also have that 
	\[\Var[\bM] = \frac{1}{T^2}\sum_{j=1}^T \Varx_{\x{j}\sim\calD}\sbra{\|\x{j}\|^2}.\]
	Note, however, that by \Cref{prop:BL} we have
	\begin{equation}
	\label{eq:big-inrad-M-variance}
		\Varx_{\bx\sim\calD}\sbra{\|\bx\|^2} \leq 4\Ex_{\bx\sim\calD}\sbra{\|\bx\|^2} \qquad\text{and so}\qquad \Var[\bM] \leq \frac{4n}{T},
	\end{equation}
	where the second inequality follows from \Cref{eq:big-inrad-M-mean}. From \Cref{eq:gaussian-M-mean,eq:big-inrad-M-mean}, we have that the means of $\bM$ under $N(0,I_n)$ versus $N(0,I_n)|_K$ differ by $\Omega(\eps)$, and from \Cref{eq:gaussian-M-variance,eq:big-inrad-M-variance} we have that the standard deviations in both settings are on the order of $O(\sqrt{n/T})$. 
	This shows that \Convex~indeed succeeds in distinguishing $\calD = N(0,I_n)$ from $\calD = N(0,I_n)|_K$ with $O(n/\eps^2)$ samples in the case that $\rinn \geq 0.1.$
		
	For the rest of the proof we can therefore assume that $\rinn < 0.1$. It follows from the hyperplane separation theorem that there exists $x^\ast \in \S^{n-1}(0.1)$ such that $K$ lies entirely on one side of the hyperplane that is tangent to $\S^{n-1}(0.1)$ at $x^\ast$.  Recalling that the standard normal distribution is invariant under rotation, we can suppose without generality that $x^\ast$ is the point $(0.1, 0^{n-1})$, so we have that either
	\[
	K\sse \{x \in \R^n : x_1 < 0.1\} \qquad\text{or}\qquad 
	K\sse \{x \in \R^n : x_1 \geq 0.1\},
	\]
	corresponding to (a) and (b) respectively in \Cref{fig:convex-ub}. Writing $\mu$ for the center of mass of $\calD$, i.e. 
	\[\mu := \Ex_{\bx\sim\calD}\sbra{\bx},\] 
	we can apply another rotation to obtain $\mu = (\mu_1,\mu_2,0^{n-2})$ while maintaining that $x^\ast = (0.1, 0^{n-1})$. 
	Now we consider two cases based on the norm of $\mu$:
	
	\medskip
	
	\textbf{Case 1.} If $\|\mu\|^2 \geq 0.06$, then we claim that Step 1 of \Cref{alg:convex} will correctly output ``truncated'' with probability at least 99/100. Indeed, it is readily verified that $\calD$ is log-concave and so this follows immediately from~\Cref{prop:DKP}.
	
%
%
	\bigskip
	
	\textbf{Case 2.} If $\|\mu\|^2 < 0.06$, then we will show that \Cref{alg:convex} will output ``truncated'' with probability at least $9/10$ in Step 3. We will do this by proceeding analogously to the ``large inradius'' ($\rinn \geq 0.1$) setting considered earlier. Recall that 
	\begin{equation}
		\label{eq:recall}
		\Ex\sbra{\bM} = \sum_{i=1}^n \Ex_{\bx\sim\calD}\sbra{\bx_i^2}.
	\end{equation}
	For $i \in \{3,\ldots, n\}$, as $\mu_i = 0$, we have by \Cref{cor:inf-nonneg-com} that $\Inf_i[K]\geq 0$, and so 
	\begin{equation} \label{eq:non-neg-inf-for-3-to-n}
		\Ex_{\bx\sim\calD}\sbra{\bx_i^2} \leq 1 \qquad\text{for } i \in\{3,\ldots,n\}
	\end{equation}
	by \Cref{fact:avg-norm-influence}. 
	

\begin{figure}[t]

\centering
%
%
%

\begin{tikzpicture}[scale=0.9]

\path[pattern=north west lines,pattern color=black!25] (2,3) -- (4,3) -- (4,-3) -- (2,-3);
\draw (2,-3)--(2,3);	
\filldraw[black] (0,0) circle (1pt) node[anchor=north east]{$(0,0)$};
\filldraw[black] (2,0) circle (1pt) node[anchor=north east]{$(0.1,0)$};

{
\draw (-0.1, -3) -- (-0.1, 3);
\draw (0.1, -3) -- (0.1, 3);
\path[fill=purple!20!blue!40!white,opacity=0.6] (-0.1,-3) -- (-0.1,3) -- (0.1,3) -- (0.1,-3);
}

\draw (-4,0)--(4,0) node[right]{$x$};
\draw (0,-3)--(0,3) node[above]{$y$};

\node (a-blah) at (0,-3.5) {(a)};
\node (a-K) at (-0.5,2) {$K$};

\end{tikzpicture}	\hfill\hfill \begin{tikzpicture}[scale=0.9]

{\path[pattern=north west lines,pattern color=black!25] (2,3) -- (-2,3) -- (-2,-3) -- (2,-3);}

{\draw (2,-3)--(2,3);}
\filldraw[black] (0,0) circle (1pt) node[anchor=north east]{$(0,0)$};
{\filldraw[black] (2,0) circle (1pt) node[anchor=north east]{$(0.1,0)$};}

{\filldraw[rotate around={45:(0,0)}, fill=purple!20!blue!40!white,opacity=0.6] (2.75,-2.25) ellipse (1 and 1);}

{\filldraw[black] (3.5, 0.35) circle (1pt) node[anchor=south]{$(\mu_1,\mu_2)$};}

\draw (-2,0)--(6,0) node[right]{$x$};
\draw (0,-3)--(0,3) node[above]{$y$};

\node (b-blah) at (2,-3.5) {(b)};
\node (b-K) at (4.5,1.5) {$K$};

\end{tikzpicture}

\caption{The ``small inradius'' ($\rinn \leq 0.1$) setting in the analysis of \Cref{alg:convex}, with $\mu$ denoting the center of mass of $K$. Our estimator for (a) is $\mathrm{Avg}\pbra{\|\x{j}\|^2}$, whereas for (b) we simply estimate $\mu$.}
\label{fig:convex-ub}
	
\end{figure}
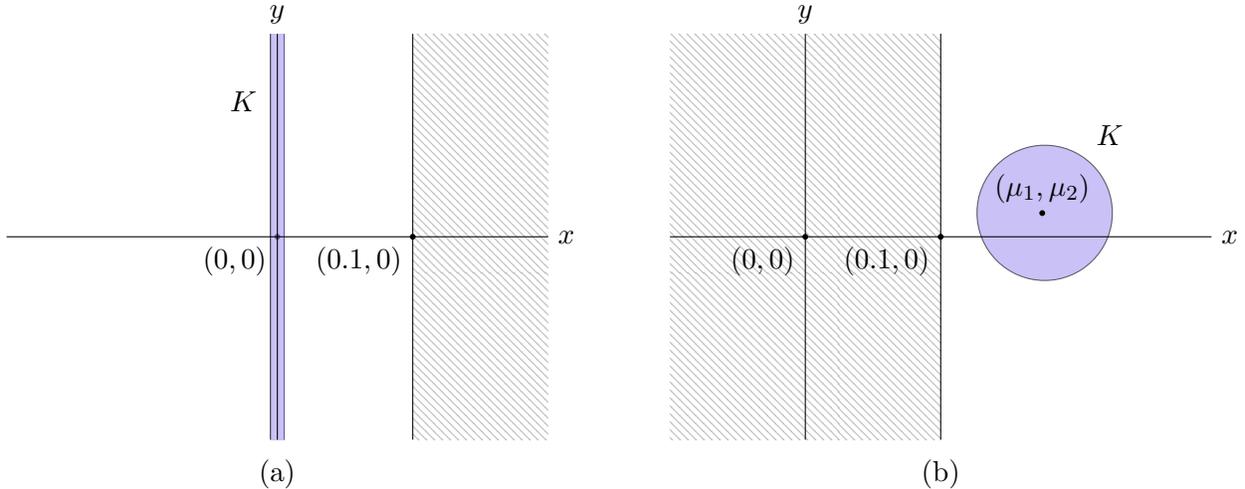
	
	We now consider coordinates 1 and 2. Consider the one-dimensional log-concave functions $f_1,f_2:\R\to\R_{\geq 0}$ defined by
	\[f_1(x) := \Ex_{\by\sim N(0,I_{n-1})}\sbra{K(x,\by)} \qquad\text{and}\qquad f_2(x) := \Ex_{\by\sim N(0,I_{n-1})}\sbra{K(\by_1, x,\by_2,\ldots, \by_{n-1})}.\]
	Note that $\E[f_1] = \E[f_2] = \vol(K)$. It is also immediate that 
	\begin{equation}
		\label{eq:vempala-shifted-var}
		\Ex_{\bx\sim\calD}\sbra{\bx_i^2} = \frac{\Ex_{\bx \sim N(0,1)}\sbra{\bx^2 f_i(\bx)}}{\vol(K)}.
	\end{equation}
	Since we have 
	\[\Ex_{\bx \sim N(0,1)}\sbra{\bx f_1(\bx)} = \mu_1\cdot\vol(K)\qquad\text{and}\qquad \Ex_{\bx \sim N(0,1)}\sbra{\bx f_2(\bx)} = \mu_2\cdot\vol(K),\]
it follows from \Cref{prop:vempala-shifted} that
	\begin{equation}
		\label{eq:inf12-lbs}
		\frac{\Ex_{\bx \sim N(0,1)}\sbra{\bx^2 f_1(\bx)}}{\vol(K)} \leq 1 + \mu_1^2 - \frac{1}{2\pi}e^{-(0.1-\mu_1)^2}
		\qquad \text{and}\qquad 
		\frac{\Ex_{\bx \sim N(0,1)}\sbra{\bx^2 f_2(\bx)}}{\vol(K)} \leq 1+\mu_2^2
	\end{equation}
(note that we used the fact that $\supp(f_1) \sse (-\infty, 0.1]$ in the first inequality above). Combining \Cref{eq:vempala-shifted-var,eq:inf12-lbs} and recalling that $\|\mu\|^2 < 0.06,$ we get that 
	\begin{equation}
		\label{eq:first-two-coords}
		\Ex_{\bx\sim\calD}\sbra{\bx_1^2 + \bx_2^2} \leq 2 + \|\mu\|^2 - \frac{1}{2\pi}e^{-(0.1-\mu_1)^2} 
		<2.06 - \frac{1}{2\pi}e^{-(0.1+\sqrt{0.06})^2}
		< 1.95.
	\end{equation}
	Combining \Cref{eq:recall,eq:non-neg-inf-for-3-to-n,eq:first-two-coords}, we get that 
	\begin{equation} \label{eq:apple}\Ex[\bM] = \Ex\sbra{\|\bx\|^2} \leq n - 0.05.
	\end{equation}
	As in \Cref{eq:big-inrad-M-variance}, by the Brascamp-Lieb inequality (\Cref{prop:BL}) we have that 
	\begin{equation} \label{eq:banana}\Varx[\bM] \leq \frac{4n}{T},\end{equation}
	and so by \Cref{eq:apple}, \Cref{eq:banana} and Chebyshev's inequality, for a suitable choice of $C$ algorithm
	\Convex~will output ``truncated'' in Step~3(a) with probability at least 0.9.
	\end{proof}

\newcommand{\z}{{\bz}}

\section{Lower Bound for Testing Convex Truncation}
\label{sec:chi-lb}

In this section, we present our lower bound for testing convex truncation. Our lower bound is information-theoretic and applies to all algorithms, computationally efficient or otherwise. Formally, we prove the following:

\begin{theorem} \label{thm:chi-lb}
Let $\eps \in (0, 1/2]$. Let $A$ be any algorithm which is given access to samples from an unknown distribution ${\cal D}$ and has the following performance guarantee:
\begin{enumerate}
\item If ${\cal D}=N(0,I_n)$, then with probability at least $9/10$ the algorithm outputs ``un-truncated'';
\item If ${\cal D} \in \Psymm$ and has $\dtv({\cal D},N(0,I_n))\geq \eps$ then with probability at least $9/10$ the algorithm outputs ``truncated.''
\end{enumerate}
Then, $A$ must draw $T = \Omega\pbra{n/\epsilon}$ samples from $\calD$.
\end{theorem}

As $\Psymm \sse \Pconv$ and $\Psymm \sse \Mix$, \Cref{thm:chi-lb} immediately that the algorithms~\Convex~and \symmconvex~are essentially optimal in terms of sample complexity for testing convex truncation and truncation by a mixture of symmetric convex sets respectively.

\subsection{Useful Preliminaries}
\label{subsec:chi-prelims}

We refer the reader to~\Cref{appendix:hermite} for background on Hermite analysis. 

%

\begin{claim}
\label{clm:GVunitvectors}
Suppose $|\delta| \le 1$, and let $(\bX, \bY)$ be $\delta$-correlated mean-0 Gaussians, i.e. 
\[
	(\bX, \bY) \sim N\pbra{
	\begin{bmatrix}
		0 \\ 0
	\end{bmatrix},
	\begin{bmatrix}
		1 & \delta \\ \delta & 1
	\end{bmatrix}
	}. 
\]
Let 
\begin{equation} \label{eq:kappa-def}
	\kappa := \Phi^{-1}\pbra{1-\frac{\eps}{2}}, \qquad\text{so}\qquad\Prx_{\bg \sim N(0,1)}\sbra{|\bg|\le\kappa}=1-\eps.
\end{equation}
For all $\eps \in (0,1)$, we have

\[
	\Pr\sbra{|\bX| \le \kappa ~\text{and}~ |\bY| \le \kappa} \leq
(1-\eps)^2 + \delta^2\cdot\eps.
\]

%
%
\end{claim}

\begin{proof}
	Let $K\sse\R$ be the symmetric interval $K := [-\kappa, \kappa]$; we also write $K: \R \to \{0,1\}$ to denote the indicator function of this interval. We have 	\begin{align}
		\Pr\sbra{|\bX| \le \kappa ~\text{and}~ |\bY| \le \kappa} 
		&= \Ex_{\bg\sim N(0,1)}\sbra{K(\bg)\cdot\U_\delta K(\bg)} \nonumber \\
		&= \sum_{\alpha\in\N} \delta^{|\alpha|} \bW^{|\alpha|}[K] \nonumber \\ 
		&= (1-\eps)^2 + \sum_{i=1}^\infty \delta^{i} \bW^{=i}[K] \nonumber \\ 
		&= (1-\eps)^2 + \delta^2\pbra{\sum_{i = 1}^\infty \delta^{2(i-1)} \bW^{=2i}[K]} \label{eq:ty-symmetry}
	\end{align}
where \Cref{eq:ty-symmetry} uses the fact that $K$ is an even (i.e.~symmetric) function and hence has no Hermite weight on odd levels.

To conclude, note that 
\[
\sum_{i = 1}^\infty \delta^{2(i-1)} \bW^{=2i}[K]
\leq \sum_{i=1}^\infty \bW^{=2i}[K]
=\sum_{i=1}^\infty \bW^{=i}[K]
=\Var[K]=\eps \cdot (1-\eps) < \eps,
\]
where the first equality again uses that $K$ is an even function.
%
\end{proof}

An equivalent rephrasing of \Cref{clm:GVunitvectors} is as follows:

\begin{corollary} \label{cor:GVunitvectors}
Let $v,w \in \mathbb{S}^{n-1}$ and let $\kappa,\eps$ be as above. 
Then for all $\eps \in (0,1)$, we have
\[
	\Prx_{\bg \sim N(0,1)^n}[|v \cdot \bg| \le \kappa \ \textrm{and} \ |w \cdot \bg| \le \kappa] \le (1-\eps)^2 + \eps \cdot \langle v,w \rangle^2.
\]
%
%
%
\end{corollary}

%
%
We will also require the following standard concentration bound:

\begin{fact}[Lemma~2.2 of~\cite{Ball1997}] \label{fact:subgaussian}
	Fix $v\in\S^{n-1}$. For $0\leq \eps < 1$, we have 
	\[
		\Prx_{\bw\sim\S^{n-1}}\sbra{\abra{\bw, v}\geq \eps} \leq \exp\pbra{\frac{-n\eps^2}{2}}.
	\]
\end{fact}

\subsection{Proof of~\Cref{thm:chi-lb}}
\label{subsec:proof-of-chi-lb}

We start by defining the family of truncations of $N(0, I_n)$ by symmetric convex sets that we will employ in our lower bound: 

\begin{definition} \label{def:dno}
	For $v\in \S^{n-1}$, define $K_v$ as the set
	\[
	 	K_v := \cbra{x \in \R^n : |x\cdot v|\leq \kappa}
	\]
	where $\kappa = \kappa(\eps)$ is as in~\Cref{eq:kappa-def}, and let $\calD_v := N(0, I_n)|_{K_v}$. 
\end{definition}

For notational convenience, let $P$ be the distribution on $\R^{nT}$ induced by first drawing a Haar-random vector $\bv\sim \S^{n-1}$ and then drawing $T$ i.i.d.~samples from $\calD_{\bv}$. 
Similarly, let $Q$ be the distribution on $\R^{nT}$ induced by drawing $T$ i.i.d.~samples from $N(0, I_n)$. 

It is clear from~\Cref{def:dno} that $\vol(K_v) = 1-\eps$ for all $v\in\S^{n-1}$. Consequently, we get
\[
	\pbra{\frac{dP}{dQ}(x^{(1)}, \ldots, x^{(T)})}^2 = \pbra{\frac{1}{1-\eps}}^{2T}\Ex_{\bv, \bw \sim \mathbb{S}^{n-1}} \sbra{ \prod_{i=1}^T K_{\bv}(x^{(i)}) \prod_{i=1}^T K_{\bw}(x^{(i)})},
\]
and so 
\begin{align}
	\dchi(P\,\|\,Q) 
	&= \Ex_{\x{i}\sim N(0, I_n)}\sbra{\pbra{\frac{1}{1-\eps}}^{2T}\Ex_{\bv, \bw \sim \mathbb{S}^{n-1}} \sbra{ \prod_{i=1}^T K_{\bv}(\x{i}) \prod_{i=1}^T K_{\bw}(\x{i})}} - 1 \nonumber \\
	&= \Ex_{\bv, \bw \sim \mathbb{S}^{n-1}}\sbra{\pbra{\frac{1}{1-\eps}}^{2T}\Ex_{\x{i}\sim N(0, I_n)} \sbra{ \prod_{i=1}^T K_{\bv}(\x{i}) \prod_{i=1}^T K_{\bw}(\x{i})}} - 1 \nonumber \\
	&= \Ex_{\bv, \bw \sim \S^{n-1}}\sbra{\pbra{\frac{1}{1-\eps}}^{2T}\Ex_{\bx\sim N(0, I_n)}\sbra{K_{\bv}(\bx)K_{\bw}(\bx)}^T} - 1\label{eq:chi-calc-independence} \\
	&= \Ex_{\bv, \bw \sim \S^{n-1}}\sbra{\Delta(\bv, \bw)^T} - 1, \nonumber
\end{align}
where $\Delta(v, w)$ is defined as
\begin{equation} \label{eq:Delta-def}
	\Delta(v, w) := \pbra{\frac{1}{1-\eps}}^{2}\Ex_{\bx\sim N(0, I_n)}\sbra{K_{v}(\bx)K_{w}(\bx)}. 
\end{equation}
Note that~\Cref{eq:chi-calc-independence} relied on the independence of $\x{i}\sim N(0, I_n)$. 

Let $T$ be as in~\Cref{thm:chi-lb}. It follows from~\Cref{cor:GVunitvectors} and the above that 
\[
	\Delta(v, w) \leq 1 + \frac{\eps\cdot\abra{v,w}^2}{(1-\eps)^2},
\]
and so 
\begin{align}
	\dchi(P\,\|\,Q) 
	&\leq \Ex_{\bv, \bw\sim \S^{n-1}}\sbra{\pbra{1 + \frac{\eps\cdot\abra{\bv,\bw}^2}{(1-\eps)^2}}^T - 1} \nonumber \\
	&= \int_{t = 0}^\infty \Prx_{\bv, \bw\sim \S^{n-1}}\sbra{\pbra{1 + \frac{\eps\cdot\abra{\bv,\bw}^2}{(1-\eps)^2}}^T - 1 \geq t}\,dt \nonumber \\ 
	&= 0.1 + \int_{t=0.1}^\infty \Prx_{\bv, \bw\sim \S^{n-1}}\sbra{\pbra{1 + \frac{\eps\cdot\abra{\bv,\bw}^2}{(1-\eps)^2}}^T - 1 \geq t}\,dt \nonumber \\
	&\leq 0.1 + \int_{t=0.1}^\infty \Prx_{\bv, \bw\sim \S^{n-1}}\sbra{\exp\pbra{\frac{T\cdot\eps\cdot\abra{\bv,\bw}^2}{(1-\eps)^2}} - 1 \geq t} \,dt \label{eq:ruggles} \\
	&= 0.1 + \int_{t=0.1}^\infty \Prx_{\bv, \bw\sim \S^{n-1}}\sbra{\abra{\bv,\bw}^2 \geq \frac{(1-\eps)^2\ln(1+t)}{T\cdot \eps}} \,dt \nonumber \\
	&\leq 0.1 + \int_{t=0.1}^\infty 2\exp\pbra{\frac{-n(1-\eps)^2\ln(1+t)}{2T \eps}}\,dt \label{eq:subgaussian} \\
	&= 0.1 + \int_{t=0.1}^\infty 2(1+t)^{-\frac{n(1-\eps)^2}{2T\eps}}\, dt \nonumber \\
	&\leq 0.11 \nonumber
\end{align}
where~\Cref{eq:ruggles} relied on the inequality $1+x\leq \exp(x)$, \Cref{eq:subgaussian} relied on~\Cref{fact:subgaussian}, and the final inequality relied on $T = \Omega(n/\eps)$. \Cref{thm:chi-lb} now follows from the above with $T = T_1$ thanks to~\Cref{eq:chi-to-tv} and~\Cref{fact:dtv-distinguishing}. 
\qed 

\section*{Acknowledgements}

A.D. is supported by NSF grants CCF-1910534, CCF-1926872, and CCF-2045128. S.N. is supported by NSF grants  IIS-1838154, CCF-2106429, CCF-2211238, CCF-1763970, and CCF-2107187. R.A.S. is supported by NSF grants  IIS-1838154, CCF-2106429, and CCF-2211238. This material is
based upon work supported by the National Science Foundation under grant numbers listed above.
Any opinions, findings and conclusions or recommendations expressed in this material are those of
the authors and do not necessarily reflect the views of the National Science Foundation (NSF). 

This work was partially completed while S.N.~was a participant in the program on ``Meta-Complexity'' at the Simons Institute for the Theory of Computing. 

\bibliography{allrefs.bib}
\bibliographystyle{alpha}

\appendix


\section{Hardness for Mixtures of General Convex Sets} \label{sec:no-generalization}

\Cref{thm:symmetric-convex-mixture-informal} gives an efficient ($O(n)$-sample) algorithm that distinguishes $N(0,I_n)$ from $N(0,I_n)$ conditioned on a mixture of (any number of) symmetric convex sets, and \Cref{thm:general-convex-informal} gives an efficient ($O(n)$-sample) algorithm that distinguishes $N(0,I_n)$ from $N(0,I_n)$ conditioned on any single convex set (which may not be symmetric). We observe here that no common generalization of these results, to mixtures of arbitrary convex sets, is possible with any finite sample complexity, no matter how large:

\begin{theorem} \label{thm:no-generalization}
Let $\Mixgeneral$ denote the class of all convex combinations (mixtures) of distributions from $\Pconv$, and let $N$ be an arbitrarily large integer ($N$ may depend on $n$, e.g.~we may have $N=2^{2^n}$).
For any $0 < \eps < 1$, no $N$-sample algorithm can successfully distinguish between the standard $N(0,I_n)$ distribution and an unknown distribution ${\cal D} \in \Mixgeneral$ which is such that $\dtv(N(0,I_n),{\cal D}) \geq \eps.$
\end{theorem}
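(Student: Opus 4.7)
The plan is to construct, for every integer $N$, a (random) distribution $\calD$ in $\Mixgeneral$ whose total variation distance from $N(0,I_n)$ is always exactly $1$ (hence $\geq \eps$), but whose $N$-sample product law is statistically close to that of $N$ i.i.d.\ draws from $N(0,I_n)$; the conclusion then follows by an averaging-over-a-prior argument in the spirit of \cite{RubinfeldS09}.

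First I would fix a parameter $T$ to be chosen later and draw $\by_1,\ldots,\by_T \sim N(0,I_n)$ independently. I would set $\calD := \frac{1}{T}\sum_{i=1}^T \delta_{\by_i}$, the standard Gaussian conditioned on the mixture of the singleton convex sets $\{\by_1\},\ldots,\{\by_T\}$ under the limiting convention of \Cref{rem:reassurance}. (Equivalently, to avoid any measure-zero conditioning, one may replace each $\{\by_i\}$ by a small ball $B(\by_i,\delta)$ and let $\delta \to 0$, with only cosmetic changes to the analysis below.) Every realization of $\calD$ lies in $\Mixgeneral$ and is purely atomic, whereas $N(0,I_n)$ is absolutely continuous, so $\dtv(\calD, N(0,I_n)) = 1 \geq \eps$ surely.

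Next I would analyze the joint law of $N$ i.i.d.\ samples $(\bx_1,\ldots,\bx_N)$ drawn from $\calD$, viewed together with the randomness generating $\calD$. Each $\bx_j$ equals $\by_{\mathbf{i}_j}$ for i.i.d.\ uniform indices $\mathbf{i}_j \in [T]$, so a birthday-style union bound shows the $N$ indices are pairwise distinct except with probability at most $\binom{N}{2}/T$; on this event, $(\bx_1,\ldots,\bx_N)$ is distributed exactly as $N$ i.i.d.\ draws from $N(0,I_n)$ (since distinct $\by_i$'s are themselves i.i.d.\ Gaussian). Consequently the averaged $N$-sample law of $\calD$ is within total variation distance $\binom{N}{2}/T$ of $N(0,I_n)^{\otimes N}$.

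Finally I would apply a standard averaging argument. Let $p(\calD)$ denote the probability that a purported $N$-sample tester $A$ outputs ``truncated'' on input from $\calD$. The theorem's hypotheses force $p(N(0,I_n)) \leq 1/10$ and $p(\calD) \geq 9/10$ for every realization of $\calD$, the latter because each such realization lies in $\Mixgeneral$ at TV distance $\geq \eps$ from $N(0,I_n)$. Taking expectations over the randomness generating $\calD$ and invoking the coupling bound above yields
\[
  \tfrac{9}{10} \;\leq\; \mathbb{E}[p(\calD)] \;\leq\; p(N(0,I_n)) + \tfrac{\binom{N}{2}}{T} \;\leq\; \tfrac{1}{10} + \tfrac{\binom{N}{2}}{T},
\]
which is contradictory once $T > 2N^2$. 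The main (minor) subtlety to verify is that $\delta_{\by}\in\Pconv$, which is the content of the limiting convention of \Cref{rem:reassurance} (or is bypassed entirely by the small-ball construction); everything else is an elementary birthday/coupling calculation combined with an averaging-over-a-prior trick.
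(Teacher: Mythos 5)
Your proof is correct and uses essentially the same engine as the paper's: a birthday-paradox/collision argument showing that collision-free samples from a randomly chosen member of $\Mixgeneral$ are distributed exactly as i.i.d.\ $N(0,I_n)$, combined with averaging over the random choice. The only substantive difference is the hard family: the paper uses a random union of $(1-\eps)M$ cells from a fixed grid of hyper-rectangles each of Gaussian volume $1/M$, which keeps every component of the mixture at positive volume and gives $\dtv$ exactly $\eps$ with no limiting convention needed, whereas your mixture of point masses at i.i.d.\ Gaussian locations requires either the measure-zero conditioning convention or the small-ball variant --- and in the latter case you should note that conditioned on distinct indices the single-sample marginal is only TV-close (not equal) to $N(0,I_n)$, so a final ``choose $\delta$ small enough given $N$ and $T$'' step is needed to close the argument.
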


\noindent
\emph{Proof sketch:} The argument is essentially that of the the well-known $\Omega(\sqrt{L})$-sample lower bound for testing whether an unknown distribution over the discrete set $\{1,\dots,L\}$ is uniform or $\Omega(1)$-far from uniform \cite{GRdist:00,BFRSW13}.
Let $M=\omega({\frac {N^2} {1-\eps}})$, and consider a(n extremely fine) gridding of $\R^n$ into disjoint hyper-rectangles $R$ each of which has $\vol(R)=1/M$.  (For convenience we may think of $M$ as being an $n$-th power of some integer, and of $\eps$ as being of the form $1/k$ for $k$ an integer that divides $M$.) We note that for any set $S$ that is a union of such hyper-rectangles, the distribution $N(0,I_n)|_S$ is an element of $\Mixgeneral$.  

Let $\bS$ be the union of a  random collection of exactly $(1-\eps)M$ many of the hyper-rectangles $R$. We have $\vol(\bS)=(1-\eps) M$, so $\dtv(N(0,I_n),N(0,I_n)|_{\bS})=\eps$, and consequently a successful $N$-sample distinguishing algorithm as described in the theorem must be able to distinguish $N(0,I_n)$ from the distribution ${\cal D}=N(0,I_n)|_{\bS}$.  But it is easy to see that any $o(\sqrt{(1-\eps)M})$-sample algorithm will, with $1-o(1)$ probability, receive a sample of points that all come from distinct hyper-rectangles; if this occurs, then the sample will be distributed precisely as a sample of the same size drawn from $N(0,I_n).$
\qed


\section{Hermite Analysis over $N(0, I_n)$}
\label{appendix:hermite}

Our notation and terminology follow Chapter~11 of~\cite{ODonnell2014}. 
We say that an $n$-dimensional \emph{multi-index} is a tuple $\alpha \in \N^n$, and we define 
\[
|\alpha| := \sum_{i=1}^n \alpha_i.
\]

For $n \in \N_{>0}$, we write $L^2(\R^n)$ to denote the space of functions $f: \R^n \to \R$ that have finite second moment under the Gaussian distribution, i.e. $f\in L^2(\R^n)$ if 
\[
\|f\|^2 = \Ex_{\bx \sim N(0,I_n)} \left[f(\bx)^2\right] < \infty.
\]
We view $L^2(\R^n)$ as an inner product space with 
\[\la f, g \ra := \Ex_{\bx \sim  N(0,I_n)}[f(\bx)g(\bx)]\]
We recall the Hermite basis for $L^2(\R, \gamma)$:

\begin{definition}[Hermite basis]
	The \emph{Hermite polynomials} $(h_j)_{j\in\N}$ are the univariate polynomials defined as
	$$h_j(x) = \frac{(-1)^j}{\sqrt{j!}} \exp\left(\frac{x^2}{2}\right) \cdot \frac{d^j}{d x^j} \exp\left(-\frac{x^2}{2}\right).$$
\end{definition}

The following fact is standard:

\begin{fact} [Proposition~11.33 of~\cite{ODonnell2014}] \label{fact:hermite-orthonormality}
	The Hermite polynomials $(h_j)_{j\in\N}$ form a complete, orthonormal basis for $L^2(\R)$. For $n > 1$ the collection of $n$-variate polynomials given by $(h_\alpha)_{\alpha\in\N^n}$ where
	$$h_\alpha(x) := \prod_{i=1}^n h_{\alpha_i}(x)$$
	forms a complete, orthonormal basis for $L^2(\R^n)$. 
\end{fact}

Given a function $f \in L^2(\R^n)$ and $\alpha \in \N^n$, we define its \emph{Hermite coefficient on} $\alpha$ as $\wh{f}(\alpha) = \la f, h_\alpha \ra$. It follows that $f:\R^n\to\R$ can be uniquely expressed as 
\[
    f = \sum_{\alpha\in\N^n} \wh{f}(\alpha)h_\alpha
\]
with the equality holding in $L^2(\R^n, \gamma)$; we will refer to this expansion as the \emph{Hermite expansion} of $f$. One can check that Parseval's and Plancharel's identities hold in this setting:
\[\abra{f,f} = \sum_{\alpha\in \N^n}\wh{f}(\alpha)^2 \qquad\text{and}\qquad \abra{f,g} = \sum_{\alpha\in \N^n}\wh{f}(\alpha)\wh{g}(\alpha).\]
It is also readily verified that $\E[f(\bx)] = \wh{f}(0^n)$ and $\Var[f(\bx)] = \sum_{\alpha\neq 0^n} \wh{f}(\alpha)^2$ where ${\bx\sim N(0,I_n)}$. 
We will write $\bW^{=k}[f]$ for the \emph{Hermite weight of $f$ at level-$k$}, i.e. 
\[
    \bW^{=k}[f] := \sum_{|\alpha| = k} \wh{f}(\alpha)^2,
\]
with $\bW^{\leq k}[f]$ defined similarly. 

\begin{definition} [Ornstein-Uhlenbeck semigroup] \label{def:ou-operator}
	Let $\rho \in [0,1]$. The \emph{Ornstein-Uhlenbeck operator} $\U_\rho$ is defined by its action on $f \in L^2(\R^n)$ as follows:  
	\[
		\U_\rho f(x) := \Ex_{\bg \sim N(0, I_n)}\left[f(\rho x + \sqrt{1-\rho}\bg)\right].
	\]
\end{definition}

The Ornstein-Uhlenbeck semigroup is sometimes referred to as the family of \emph{Gaussian noise operators} or \emph{Mehler transforms}. The Ornstein-Uhlenbeck semigroup acts on the Hermite expansion as follows:

\begin{fact} [Proposition~11.33, \cite{ODonnell2014}] \label{fact:gaussian-noise-expansion}
	For $f \in L^2(\R^n, \gamma)$, the function $\U_\rho f$ has Hermite expansion
	$$\U_\rho f = \sum_{\alpha\in\N^n}\rho^{|\alpha|}\wh{f}(\alpha)h_\alpha.$$
\end{fact}

\end{document}